\newtheorem{lemma}[theorem]{Lemma}
\setlist[description]{itemsep=0mm}
\address{%
$^{1}$ \quad Department of Mathematics, Quaid-i-Azam University, Islamabad $45320$, Pakistan; ikram.ullah@math.qau.edu.pk~(I.U.)\\ 
$^{2}$ \quad School of Mathematics and Statistics, University College Dublin, Belfield, {Dublin 4}
, Ireland}
\abstract{We propose an image encryption scheme based on {quasi-resonant Rossby/drift wave triads (related to elliptic surfaces)} and Mordell elliptic curves (MECs). By defining a total order on quasi-resonant triads, at a first stage we construct quasi-resonant triads using auxiliary parameters of {elliptic surfaces} in order to generate pseudo-random numbers. At a second stage, we employ an MEC to construct a dynamic substitution box (S-box) for the plain image. The generated pseudo-random numbers and S-box are used  to provide diffusion and confusion, respectively, in the tested image. We~test the proposed scheme against well-known attacks by encrypting {all gray images taken from the} USC-SIPI 
image database. Our experimental results indicate the high security of the newly developed scheme. Finally, via extensive comparisons we show that the new scheme outperforms other popular schemes.}
\begin{document}

\section{Introduction}
The exchange of confidential images via the internet is usual in today's life, even though the internet is an open source that is unsafe and unauthorized persons can steal useful or sensitive information. Therefore it is essential to be able to share images in a secure way. This goal is achieved by using cryptography. Traditional cryptographic techniques such as data encryption standard (DES) and advanced encryption standard (AES) are not suitable for image transmission because image pixels are usually highly correlated~\cite{Maqsood,signal45}. By contrast, DES and AES are ideal techniques for text encryption~\cite{signal52}, so researchers are trying to develop such techniques to meet the demand for reliable image delivery.

A number of image encryption schemes have been developed using different approaches~\cite{YangPan,Zhong, Li,Hua,Xie,NF,Luo, LiLi, Hua1,Wu,Yousaf}.{ Hua et al.~\cite{Hua1} developed a highly secure image encryption algorithm, where pixels are shuffled via the principle of the Josephus problem and diffusion is obtained by a filtering technology. \mbox{Wu et al.~\cite{Wu}} proposed a novel image encryption scheme by combining a random fractional discrete cosine transform (RFrDCT) and the chaos-based Game of Life (GoL). In their scheme, the desired level of confusion and diffusion is achieved by GoL and an XOR 
operation, respectively.} ``Confusion'' entails hiding the relation between input image, secret keys and the corresponding cipher image, and ``diffusion'' is an alteration of the value of each pixel in an input image~\cite{Maqsood}.

One of the dominant trends in encryption techniques is  chaos-based encryption~\cite{Ismail,Tang,Roa,Yu,Zhu,ElKamchouchi}. The~reason for this dominance is that the chaos-based encryption schemes are highly sensitive to the initial parameters. { However, there are certain chaotic cryptosystems that exhibit a lower security level due to the usage of chaotic maps with less complex behavior (see~\cite{h46}). This problem is addressed in~\cite{Hua2} by introducing a cosine-transform-based chaotic system (CTBCS) for encrypting images with higher security. Xu et al.~\cite{X14} suggested an image encryption technique based on fractional chaotic systems and verified experimentally the higher security of the underlying cryptosystem. \mbox{Ahmad et al.~\cite{A15}} highlighted certain defects of the above-mentioned cryptosytem by recovering the plain image without the secret key. Moreover, they proposed an enhanced scheme to thwart all kinds of attacks.}

The chaos-based algorithms also use pseudo-random numbers and substitution boxes (S-boxes) to create confusion and diffusion~\cite{Cheng,[ra23]}. Cheng et al.~\cite{Cheng} proposed an image encryption algorithm based on pseudo-random numbers and AES S-box. The pseudo-random numbers are generated using AES S-box and chaotic tent maps. The scheme is optimized by combining the permutation and diffusion phases, but the image is encrypted in rounds, which is time consuming. Belazi et al.~\cite{[ra23]} suggested an image encryption algorithm using a new chaotic map and logistic map. The new chaotic map is used to generate a sequence of pseudo-random numbers for masking phase. Then eight dynamic S-boxes are generated. The masked image is substituted in blocks via aforementioned S-boxes. The~substituted image is again masked by another pseudo-random sequence generated by the logistic map. Finally, the~encrypted image is obtained by permuting the masked image. The permutation is done by a sequence generated by the map function. This algorithm fulfills the security analysis but performs slowly due to the four cryptographic phases. In~\cite{signal46}, an image encryption method based on chaotic maps and dynamic S-boxes is proposed. The chaotic maps are used to generate the pseudo-random sequences and S-boxes. To break the correlation, pixels of an input image are permuted by the pseudo-random sequences. In a second phase the permuted image is decomposed into blocks. Then blocks are encrypted by the generated S-boxes to get the cipher image. From histogram analysis it follows that the suggested technique generates cipher images with a nonuniform distribution.

Similar to the chaotic maps, elliptic curves (ECs) are sensitive to input parameters, but EC-based cryptosystems are more secure than those of chaos~\cite{jia}. Toughi et al.~\cite{[ra]} developed a hybrid encryption algorithm using elliptic curve cryptography (ECC) and AES. The points of an EC are used to generate pseudo-random numbers and keys for encryption are acquired by applying AES to the pseudo-random numbers. The proposed algorithm gets the promising security but pseudo-random numbers are generated via the group law, which is time consuming. In~\cite{signal52}, a cyclic EC and a chaotic map are combined to design an encryption algorithm. The developed scheme overcomes the drawbacks of small key space but is unsafe to the known-plaintext/chosen-plaintext attack~\cite{Liu}. Similarly, \mbox{Hayat et al.~\cite{signal}} proposed an EC-based encryption technique. The stated scheme generates pseudo-random numbers and dynamic S-boxes in two phases, where the construction of S-box is not guaranteed for each input EC. Therefore, changing of ECs to generate an S-box is a time-consuming work. Furthermore, the~generation of ECs for each input image makes it insufficient.

Based on the above discussion, we propose an improved image encryption algorithm, based on quasi-resonant Rossby/drift wave triads \cite{Busta,UmarHayat} (triads, for short) and Mordell elliptic curves (MECs). The triads are utilized in the generation of pseudo-random numbers and MECs are employed to create dynamic S-boxes.
The proposed scheme is novel in that it introduces the technique of pseudo-random numbers generation using triads, which is faster than generating pseudo-random numbers by ECs.
 Moreover, the scheme does not require to separately generate triads for each input image of the same size. In the present scheme, MECs are used opposite to~\cite{signal}, in the sense that now, for each input image, the generation of a dynamic S-box is guaranteed~\cite{Iso}. Finally, extensive performance analyses and comparisons reveal the efficiency of the proposed scheme.

\textls[-15]{This paper is organized as follows. Preliminaries are described in Section~\ref{pre}. In Section~\ref{scheme}, the~proposed encryption algorithm is explained in detail. Section~\ref{SEAnalysis} provides the experimental results as well as a comparison between the proposed method and other existing popular schemes. Lastly, conclusions are presented in Section~\ref{Conclusion}.}

\section{Preliminaries}\label{pre}
{\bf Barotropic vorticity equation:} The barotropic vorticity equation (in the so-called $\beta$-plane approximation) {is one of the simplest two-dimensional models of the large-scale dynamics of a shallow layer of fluid on the surface of a rotating sphere. It is described in mathematical terms by the partial differential equation \begin{equation}\label{BVE}
\frac{\partial}{\partial t} (\nabla^2\psi - F\psi)+\Big(\frac{\partial\psi}{\partial x}\frac{\partial\nabla^2\psi}{\partial y}-\frac{\partial\psi}{\partial y}\frac{\partial\nabla^2\psi}{\partial x}\Big)+\gamma\frac{\partial\psi}{\partial x}=0,
\end{equation}
where $\psi(x,y,t)\in \mathbb{R}$ represents the geopotential height, $\gamma$ is the Coriolis parameter, a real constant measuring the variation of the Coriolis force with latitude ($x$ represents longitude and $y$ represents latitude) and $F$ is a non-negative real constant representing the inverse of the square of the deformation radius. We assume periodic boundary conditions: $\psi(x+2\pi,y,t) = \psi(x,y+2\pi,t) =\psi(x,y,t)$ for all $x,y,t \in \mathbb{R}$. In the literature Equation~(\ref{BVE}) is also known as the Charney–Hasegawa–Mima equation (CHM) \cite{charney1948scale,hasegawa1978pseudo,connaughton2010modulational,harris2013percolation, galperin_read_2019}. This equation accepts harmonic solutions, known as Rossby waves, which are solutions of both the linearized form and the whole (nonlinear) form of Equation~(\ref{BVE}). A Rossby wave solution is given explicitly by the parameterized function $\psi_{(k,l)}(x,y,t)=\Re\{A\,{\mathrm e}^{i(kx+ly-\omega(k,l)t)}\}$, where $A \in \mathbb{C}$ is an arbitrary constant, $\omega(k,l)=-\frac{\gamma\, k}{k^2+l^2 + F}$ is the so-called dispersion relation, and $(k,l)\in\mathbb{Z}^2$ is called the wave vector. For simplicity, we take $\gamma=-1$ and $F=0$ in what follows \cite{Busta,UmarHayat}.

\textbf{Resonant triads:} As Equation (\ref{BVE}) is nonlinear, modes with different wave vectors tend to couple and exchange energy. If the nonlinearity is weak, this exchange happens to be quite slow and is more efficient amongst groups of modes that are in \emph{resonance}.  To the lowest order of nonlinearity in Equation~(\ref{BVE}), approximate solutions known as resonant triad solutions can be constructed via linear combinations of the form
$$\psi(x,y,t) = \Re\{A_1\,{\mathrm e}^{i(k_1x+l_1y-\omega(k_1,l_1)t)}+A_2\,{\mathrm e}^{i(k_2 x+l_2y-\omega(k_2,l_2)t)}+A_3\,{\mathrm e}^{i(k_3 x+l_3y-\omega(k_3,l_3)t)}\}\,,$$
where $A_1, A_2, A_3$ are slow functions of time (they satisfy a closed system of ODEs, 
not shown here), and the wave vectors $(k_1,l_1),(k_2,l_2)$ and $(k_3,l_3)$ satisfy the Diophantine system of equations:
\begin{equation}\label{eqns}
 k_1+k_2=k_3, \,\,\,\,\,\, l_1+l_2=l_3 \,\,\,\,\,\,{\rm and}\,\,\,\,\,\, \omega_1+\omega_2=\omega_3,
\end{equation}
for $\omega_i=\omega(k_i,l_i),i=1,2,3$. A set of three wavevectors satisfying Equations (\ref{eqns}) is called a resonant triad. Solutions can be found analytically via a rational transformation to elliptic surfaces (see below).

\textbf{Quasi-resonant triads and detuning level:} If, in (\ref{eqns}), the equation $\omega_1+\omega_2=\omega_3$ is replaced by the inequality $|\omega_1+\omega_2-\omega_3| \leq \delta^{-1}$, for a large positive number $\delta$, then the triad becomes a quasi-resonant triad and $\delta^{-1}$ is known as the detuning level of the quasi-resonant triad. It is possible to construct quasi-resonant triads via downscaling of resonant triads that have very large wave vectors~\cite{Busta}. For~simplicity, in what follows we simply call a quasi-resonant triad  a triad and denote it by $\Delta$. Finally, to avoid over-counting of triads we will impose the condition $k_3 > 0$.}

{\bf Rational transformation:} In~\cite{Busta}, wave vectors are explicitly expressed in terms of rational variables $X,Y$ and $D$  as follows:
\begin{equation}\label{inv}
 \frac{k_1}{k_3}=\frac{X}{Y^2+D^2}, \,\,\,\,\,\,
 \frac{l_1}{k_3}=\Big(\frac{X}{Y}\Big)\Big(1-\frac{D}{Y^2+D^2}\Big),\,\,\,\,\,\,
 \frac{l_3}{k_3}=\frac{D-1}{Y}.
 \end{equation}
In the case $F=0$, the rational variables $X,Y, D$ lie on an elliptic surface. 
The transformation is bijective and its inverse mapping is given by:
 \begin{equation}\label{map}
 X=\frac{k_{3}(k_{1}^2+l_{1}^2)}{k_{1}(k_{3}^2+l_{3}^2)},\,\,\,\,\,\,
 Y=\frac{k_{3}(k_{3}l_{1}-k_{1}l_{3})}{k_{1}(k_{3}^2+l_{3}^2)},\,\,\,\,\,\,
 D=\frac{k_{3}(k_{3}k_{1}-l_{1}l_{3})}{k_{1}(k_{3}^2+l_{3}^2)}.
 \end{equation}

\textbf{New parameterization:} In~\cite{Kopp}, Kopp parameterized the resonant triads and in terms of parameters $u$ and $t$ it follows by \textnormal{\cite{Kopp}} (Equation~(1.22)) that:
\begin{align}
\frac{k_{1}}{k_{3}}=& (t^2+u^2)(t^2-2u+u^2)/(1-2u),\\
\frac{l_{3}}{k_{3}}=& \big(u(2u-1)+(t^2+u^2)(t^2-2u+u^2)\big)/\big(t(1-2u)\big),\\
\frac{l_{1}}{k_{3}}=& (t^2+u^2)\big((2u-1)+u(t^2-2u+u^2)\big)/\big(t(1-2u)\big).
\end{align}
In $2019$, Hayat et al.~\cite{{UmarHayat}} found a new parameterization of $X,Y$ and $D$ in terms of auxiliary parameters $a,b$ and hence $\frac{k_{1}}{k_{3}},\frac{l_{3}}{k_{3}}$ and $\frac{l_{1}}{k_{3}}$ are given by:
\begin{align}
\frac{k_{1}}{k_{3}}=&\frac{\big(a^{2}+b(2-3b)+1\big)^{3}}{(a^{2}-3b^{2}-2b+1)\big(2(11-3a^{2})b^{2}+(a^{2}+1)^{2}-16ab+9b^{4}\big)},\label{pa1}\\
\frac{l_{3}}{k_{3}}=&\frac{6(a^{2}+a-1)b^{2}-(a+1)^{2}(a^{2}+1)+4ab-9b^{4}}{(a^{2}-3b^{2}-1)(a^{2}-3b^{2}-2b+1)},\label{pa2}\\
\frac{l_{1}}{k_{3}}=&\frac{\big(a^{2}+b(2-3b)+1\big)}{\footnotesize\begin{matrix}
(a^{2}-3b^{2}-1)(a^{2}-3b^{2}-2b+1)\big(2(11-2a^{2})b^{2}+(a^{2}+1)^{2}-16ab+9b^{4}\big)&\\
\times[a^{6}+2a^{5}+a^{4}(-9b^{2}-6b+3)-4a^{3}(3b^{2}+2b-1)+3a^{2}(3b^{2}+2b-1)^{2}&\\
+2a(9b^{4}+12b^{3}+14b^{2}-4b+1)-(3b^{2}+1)^{2}(3b^{2}+6b-1)]\end{matrix}}\label{pa3}.
\end{align}

\textbf{Elliptic curve (EC): } Let $\mathbb{F}_{p}$ be a finite field for any prime $p$, then an EC $E_{p}$ over $\mathbb{F}_{p}$ is defined by
\begin{equation}\label{EC}
y^{2}\equiv x^{3}+bx+c \pmod{p},
\end{equation}
where $b,c\in \mathbb{F}_{p}$. The integers $b,c$ and $p$ are called parameters of an EC. The number of all $(x,y)\in \mathbb{F}_{p}^{2}$ satisfying the congruence~(\ref{EC}) is denoted by $\#E_{p}$.

\textbf{Mordell elliptic curve (MEC): } In the special but important case $b=0$, the above EC is known as an MEC and is represented by
\begin{equation}\label{MMEC}
y^{2}\equiv x^{3}+c \pmod{p}.
\end{equation}
For $p\equiv2 \pmod{3}$, there are exactly $p+1$ points $(x,y)\in \mathbb{F}_{p}^{2}$ satisfying the congruence~(\ref{MMEC}), see \cite{Wash} for further details.

If points on $E_{p}$ are ordered according to some total order $\prec$ then $E_{p}$ is said to be an ordered EC. Recall that total order is a binary relation which possesses the reflexive, antisymmetric and transitive properties. Azam et al.~\cite{Azam1} introduced a total order known as a natural ordering on MECs given by\\
\[
(x_{1},y_{1})\prec (x_{2},y_{2}) \Leftrightarrow
\begin{cases}
{\rm either \ } x_{1}<x_{2}, {\rm \ or} \\[5pt]
x_{1}=x_{2} {\rm \ and \ } y_{1}<y_{2}, %
\end{cases}
\]
and generated efficient S-boxes using the aforesaid ordering. We will use natural ordering to generate S-boxes. Thus from here on $E_{p}$ stands for a naturally ordered MEC unless it is specified otherwise.

\section{The Proposed Encryption Scheme}\label{scheme}
The proposed encryption scheme is based on pseudo-random numbers and S-boxes. The~pseudo-random numbers are generated using quasi-resonant triads.
To get an appropriate level of diffusion we need to properly order the $\Delta$s. For this purpose we define a binary relation $\lesssim$ as follows.\\
\subsection{Ordering on Quasi-Resonant Triads}
Let $\Delta,\Delta'$ represent the triads $(k_{i},l_{i}),(k_{i}',l_{i}'),i=1,2,3$, respectively, then
\[
\Delta \lesssim\Delta' \Leftrightarrow
\begin{cases}
{\rm either \ } a < a', {\rm \ or} \\[5pt]
a=a' {\rm \ and \ } b < b', {\rm \ or}\\[5pt]
a=a', b = b' {\rm \ and \ } k_{3}\leq k_{3}',
\end{cases}
\]\\
where $a,b$ and $a',b'$ are the corresponding auxiliary parameters of $\Delta$ and $\Delta'$, respectively.
\begin{lemma}
If $T$ denotes the set of $\Delta$s in a box of size $L$, then $\lesssim$ is a total order on $T$.
\end{lemma}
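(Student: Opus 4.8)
The plan is to verify the three defining properties of a total order --- reflexivity, antisymmetry, and transitivity --- for the relation $\lesssim$ on the finite set $T$ of triads inside a box of size $L$, together with totality (comparability of any two elements). The key observation is that $\lesssim$ is built as a \emph{lexicographic} comparison on the tuple of parameters $(a,b,k_3)$ attached to each triad $\Delta$: first compare the auxiliary parameter $a$, then $b$, then $k_3$. Since each of these coordinates is a rational (indeed real) number, each is individually totally ordered by $<$ (with ties broken by $=$ in the first two slots and by $\leq$ in the last). A lexicographic product of total orders is a total order; so most of the work is to set this up cleanly and then invoke that standard fact.

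First I would make precise the map $\Delta \mapsto (a(\Delta), b(\Delta), k_3(\Delta))$: for each triad in the box, the auxiliary parameters $a,b$ are well defined via the parameterization (\ref{pa1})--(\ref{pa3}) together with the rational transformation (\ref{map}) inverting (\ref{inv}), and $k_3 > 0$ by the normalizing convention imposed earlier. Reflexivity is then immediate: for any $\Delta$ we have $a = a$, $b = b$, and $k_3 \leq k_3$, so the third case of the definition applies and $\Delta \lesssim \Delta$. For antisymmetry, suppose $\Delta \lesssim \Delta'$ and $\Delta' \lesssim \Delta$. The strict inequalities $a < a'$ and $a' < a$ cannot both hold, nor can $a<a'$ coexist with $a=a'$; hence both relations must fall in cases two or three, forcing $a = a'$. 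Repeating the argument on the second coordinate forces $b = b'$, and then the third coordinate gives $k_3 \leq k_3'$ and $k_3' \leq k_3$, so $k_3 = k_3'$. Thus $\Delta$ and $\Delta'$ have identical parameter tuples, which (since the rational transformation is bijective on its domain) means $\Delta = \Delta'$ as triads. Transitivity is the familiar case analysis for lexicographic orders: from $\Delta \lesssim \Delta'$ and $\Delta' \lesssim \Delta''$ one splits on whether the $a$-coordinates are strictly increasing at some step (then $a < a''$ and we are in case one) or all equal (then pass to the $b$-coordinate, and so on), using transitivity of $<$ and of $\leq$ on $\mathbb{R}$ at each level. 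Totality is handled the same way: given distinct $\Delta,\Delta'$, either $a \neq a'$, in which case one of $a<a'$, $a'<a$ holds; or $a=a'$ and we compare $b$; or $a=a',b=b'$ and we compare $k_3$ via $\leq$, which is total.

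I do not expect a genuine obstacle here; the statement is essentially the assertion that a lexicographic order on a triple of totally ordered coordinates is a total order, restricted to a finite subset. The only point requiring a word of care --- and the closest thing to a ``hard part'' --- is the step in antisymmetry where equality of parameter tuples is upgraded to equality of triads: this uses that the correspondence between a triad $(k_i,l_i)_{i=1,2,3}$ and its data $(a,b,k_3)$ is injective, which follows from the bijectivity of the rational transformation (\ref{inv})--(\ref{map}) stated in the excerpt together with the bijectivity of Hayat et al.'s parameterization of $X,Y,D$ by $(a,b)$. Once that is noted, finiteness of $T$ (from the box size $L$) is not even needed for the order axioms --- it only guarantees, as a bonus, that $\lesssim$ is a \emph{well}-order on $T$, which is what makes it usable for indexing the pseudo-random sequence in the encryption scheme.
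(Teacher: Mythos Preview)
Your proposal is correct and follows essentially the same approach as the paper: a direct verification of reflexivity, antisymmetry, and transitivity for the lexicographic comparison on $(a,b,k_3)$, with the antisymmetry step upgraded to equality of triads via the parameterization (\ref{pa1})--(\ref{pa3}) and the resonance relations (\ref{eqns}). Your treatment is in fact slightly more complete, since you also check comparability (totality), which the paper omits --- consistent with its own (nonstandard) statement that a total order is characterized by reflexivity, antisymmetry, and transitivity alone.
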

\begin{proof}
The reflexivity of $\lesssim$ follows from $a=a, b = b$ and $k_{3}= k_{3}$ and hence $\Delta \lesssim\Delta.$ As for antisymmetry we suppose $\Delta \lesssim\Delta'$ and $\Delta' \lesssim\Delta$. Then, by definition $a \leq a'$ and $a' \leq a$, which imply $a=a'$. Thus we are left with two results: $b \leq b'$ and $b' \leq b$, which imply $b =b'$. Thus, we obtain the results $k_{3}\leq k_{3}'$ and $k_{3}'\leq k_{3}$, which ultimately give $k_{3} = k_{3}'$. Solving Equations~(\ref{pa1})--(\ref{pa3}) for the obtained values, we get $k_{1}= k_{1}',l_{3}= l_{3}'$ and from Equation~(\ref{eqns}) it follows that $l_{2}= l_{2}'$. Consequently $\Delta =\Delta'$ and $\lesssim$ is antisymmetric. As for transitivity, let us assume $\Delta \lesssim\Delta'$ and $\Delta' \lesssim\Delta''$. Then $a\leq a'$ and $a'\leq a''$, implying  $a\leq a''$. If $a < a''$, then transitivity follows. If $a = a''$, then $a'=a''$ too. Thus, $b\leq b'$ and $b' \leq b''$, so $b\leq b''$. If $b<b''$, then transitivity follows. If $b=b''$, then $b'=b''$ too. Thus, $k_{3}\leq k_{3}'$ and $k_{3}'\leq k_{3}''$, implying $k_{3}\leq k_{3}''$ and hence transitivity follows: $\Delta \lesssim\Delta''$.
\end{proof}
Let $\stackrel{*}{T}$ stand for the set of $\Delta$s ordered with respect to the order $\lesssim$. The main steps of the proposed scheme are explained as follows.
\subsection{Encryption}
\noindent \textbf{A. Public parameters:} In order to exchange the useful information the sender and receiver should agree on the public parameters described as below:
\begin{enumerate}[align=parleft,leftmargin=*,labelsep=7.5mm]
	\item[(1)] Three sets: choose three sets $\mathcal{A}_{i}=[A_{i},B_{i}],i=1,2,3$ of consecutive numbers with unknown step sizes, where the end points $A_{i},B_{i},i=1,2,3$ are rational numbers.
\item[(2)] A total order: select a total order $\prec$ so that the triads generated by the above-mentioned sets may be arranged with respect to that order.
\end{enumerate}

Suppose that $P$ represents an image of size $m\times n$ to be encrypted, and the pixels of $P$ are arranged in column-wise linear ordering. Thus, for positive integer $i\leq mn$, $P(i)$  represents the $i$-th pixel value in linear ordering. {Define $S_{\rm P}$ as the sum of all pixel values of the image $P$.} Then the proposed scheme chooses the secret keys in the following ways.\\

\noindent \textbf{B. Secret keys:} To generate confusion and diffusion in an image, the sender chooses the secret keys as~follows.
\begin{enumerate}[align=parleft,leftmargin=*,labelsep=7.5mm]
\item[(1)] Step size: select positive integers $a_{i},b_{i}$ to construct the step sizes $\alpha_{i}=\frac{a_{i}}{b_{i}}$ of $\mathcal{A}_{i},i=1,2$. Additionally, choose a non-negative integer $a_{3}$ as a step size of $\mathcal{A}_{3}$ in such a way that $\prod_{i=1}^{3}n_{i}\geq mn$, where $\#\mathcal{A}_{i}=n_{i}$ represents the number of elements in $\mathcal{A}_{i}$.
\item[(2)] Detuning level: fix some posive integer $\delta$ to find the detuning level $\delta^{-1}$ allowed for the triads.
\item[(3)] Bound: select a positive integer $L$ such that $|k_{i}|, |l_{i}|\leq L$ for $i=1,2,3.$ This condition is imposed in order to bound the components of the triad wave vectors. Furthermore, choose an integer $t$ to find $r=\lfloor S_{\rm P}/t\rceil$, where $\lfloor\cdot\rceil$ gives the nearest integer when $S_{\rm P}$ is divided by $t$. The reason for choosing such a $t$ is to generate key-dependent S-boxes and the integer $r$ is used to diffuse the components of triads.
\item[(4)] A prime: select a prime $p\geq 257$ such that $p\equiv2 \pmod{3}$ as a secret key for computing nonzero  $c\equiv S_{\rm P}+t\pmod{p}$  to generate an S-box $\zeta_{E_{p}}(p,t,S_{\rm P})$ on the $E_{p}$. { The S-box construction technique is made clear in Algorithm~\ref{SAL}, and the S-box generated for $p=1607, t=182$ and $S=0$ by Algorithm~\ref{SAL} is shown in Table~\ref{SbT}. Furthermore, the cryptographic properties of the said S-box are evaluated in Sections~\ref{Sb1} and~\ref{Sb2}. }
\end{enumerate}
\begin{algorithm}[H]
\DontPrintSemicolon
\tcc{$B$ is a set of points $(x,y)$ satisfying $E_{p}$, $B(i)$ is $i$-th point of $B$ and $y_{i}$ stands for $y$-component of point $B(i)$.}
\SetKwInOut{Input}{Input}\SetKwInOut{Output}{Output}
\Input{A prime $p\equiv2 \pmod{3}$ and two integers $t$ and $S$ such that {$c=S+t$} and $S+t\not\equiv 0 \pmod{p}$.}
\Output{An S-box $\zeta_{E_{p}}(p,t,S)$. }
$B:=\emptyset$;\;
$Y:=[0, (p-1)/2]$;\;
$i\leftarrow 0$;   \;
\For{$x\in [0,p-1]$}{
    \For{$y\in Y$}{
    \If{$y^2\equiv x^3+c \pmod{p}$}{
    $i\leftarrow i+1$; $B(i):=(x,y)$;\\
    \If{$y\not=0$}{
    $i\leftarrow i+1$; $B(i):=(x,p-y)$;\\
    break;\\
    }
    }

 }
 {$Y=Y-\{y\}$};
}
$\zeta_{E_{p}}(p,t,S)=\{y_{i}\in B(i): 0\leq y_{i}< 256\}.$

\caption{{Construction of $8\times 8$ S-box.}}
\label{SAL}
\end{algorithm}
\begin{table}[H]
\caption{{The obtained S-box $\zeta_{E_{1607}}(1607,182,0)$}.}
\label{SbT}
\centering
\resizebox{\columnwidth}{!}
{
\bgroup
\begin{tabular}{cccccccccccccccc}
\toprule
220 & 118 & 17  & 158 & 25  & 138 & 33  & 196 & 247 & 252 & 15  & 226 & 135 & 177 & 232 & 83  \\
161 & 70  & 107 & 186 & 137 & 236 & 21  & 142 & 131 & 103 & 54  & 58  & 217 & 181 & 201 & 172 \\
91  & 84  & 223 & 89  & 29  & 156 & 136 & 14  & 69  & 99  & 164 & 171 & 35  & 188 & 76  & 139 \\
153 & 16  & 198 & 227 & 32  & 10  & 115 & 122 & 184 & 61  & 208 & 225 & 213 & 106 & 94  & 56  \\
165 & 40  & 245 & 189 & 163 & 239 & 193 & 194 & 129 & 175 & 241 & 141 & 130 & 231 & 215 & 127 \\
151 & 199 & 105 & 22  & 148 & 39  & 179 & 173 & 78  & 248 & 81  & 23  & 75  & 55  & 146 & 109 \\
195 & 251 & 178 & 170 & 162 & 206 & 228 & 169 & 147 & 28  & 210 & 221 & 80  & 121 & 202 & 77  \\
9   & 74  & 197 & 31  & 26  & 154 & 145 & 44  & 47  & 82  & 43  & 60  & 117 & 250 & 88  & 191 \\
67  & 8   & 174 & 93  & 1   & 20  & 128 & 53  & 218 & 237 & 96  & 72  & 3   & 65  & 6   & 253 \\
150 & 101 & 119 & 87  & 160 & 133 & 108 & 57  & 41  & 64  & 51  & 49  & 185 & 243 & 2   & 249 \\
167 & 50  & 205 & 183 & 97  & 114 & 48  & 27  & 246 & 254 & 124 & 92  & 19  & 134 & 159 & 95  \\
24  & 224 & 111 & 62  & 116 & 168 & 200 & 86  & 79  & 143 & 126 & 112 & 45  & 71  & 125 & 13  \\
5   & 216 & 187 & 222 & 7   & 113 & 238 & 36  & 204 & 52  & 140 & 46  & 240 & 85  & 207 & 4   \\
152 & 104 & 235 & 190 & 242 & 68  & 63  & 203 & 230 & 176 & 180 & 59  & 157 & 244 & 66  & 212 \\
34  & 90  & 120 & 0   & 30  & 166 & 37  & 255 & 38  & 110 & 211 & 233 & 11  & 155 & 209 & 219 \\
192 & 12  & 144 & 73  & 182 & 132 & 98  & 214 & 42  & 102 & 18  & 149 & 123 & 229 & 100 & 234\\ \bottomrule
\end{tabular}
\egroup
}
\end{table}

The positive integers $a_{1}, b_{1}, a_{2}, b_{2}, a_{3},\delta,L,S_{\rm P},t$ and $p$  are secret keys. Here it is mentioned that the parameters $a_{1}, b_{1}, a_{2}, b_{2}, a_{3},\delta$ and $L$ are used to generate $mn$ triads in a box of size $L$. The generation of triads is explained step by step in Algorithm~\ref{triad}. These triads along with keys $S_{\rm P}$ and $t$ are used to generate the sequence $\beta_{\stackrel{*}{T}}(t,S_{\rm P})$ of pseudo-random numbers.

\begin{algorithm}[H]
\DontPrintSemicolon
\tcc{T is a set containing the Quasi-resonant triads, while $m$ and $n$ are the dimensions of an input image.}
\SetKwInOut{Input}{Input}\SetKwInOut{Output}{Output}
\Input{Three sets $\mathcal{A}_{i}, i=1,2,3$, inverse detuning level $\delta$, bound $L$, two positive integers $m$ and $n$.}
\Output{Quasi-resonant triads}
$T:=\emptyset$;\;
$c_{1}\leftarrow 0,c_{2}\leftarrow 1$ ;   \;
\For{$a\in\mathcal{A}_{1}$}{
    \For{$b\in\mathcal{A}_{2}$}{
    $c_{1}\leftarrow c_{1}+1$;\\
    Calculate and store the values of $k_{1}^{\prime}(c_{1}), l_{3}^{\prime}(c_{1})$, and
    $l_{1}^{\prime}(c_{1})$ for each pair $(a,b)$ using Equations~(\ref{pa1})--(\ref{pa3}).

 }
}
\For{$c_{2}\in [1, c_{1}]$}{
\For{$k_{3}\in \mathcal{A}_{3}$}{
$k_{1}=\lfloor(k_{1}^{\prime}(c_{2})*k_{3})\rceil, l_{3}=\lfloor(l_{3}^{\prime}(c_{2})*k_{3})\rceil$ and $l_{1}=\lfloor(l_{1}^{\prime}(c_{2})*k_{3})\rceil$;\\
$k_{2}=k_{3}-k_{1}, l_{2}=l_{3}-l_{1}$ and $\omega_{i}=k_{i}/(k_{i}^2+l_{i}^2),i=1,2,3;$\\
$\omega_{4}=\omega_{3}-\omega_{2}-\omega_{1};$\\
\If{$|\omega_{4}|<\delta^{-1}$  and $0<|k_{i}|,|l_{i}|<L,i=1,2,3$}{
$T:=T\cup\{\Delta\}$;\\

        }
\If {\#T=mn}{break;}
}
break;
}

{Sort $T$ with respect to the ordering $\lesssim$ to get $\stackrel{*}{T}$.}
\caption{Generating quasi-resonant triads.}
\label{triad}
\end{algorithm}
Thus $\Delta_{\rm j}$ represents the $j$-th triad in ordered set $\stackrel{*}{T}$. Moreover, $(k_{ji},l_{ji}),i=1,2,3$ are the components of $\Delta_{\rm j}$ . In Algorithm~\ref{RN}, the generation of $\beta_{\stackrel{*}{T}}(t,S_{\rm P})$ is interpreted.\\
\begin{algorithm}[H]\setstretch{1.50}
\DontPrintSemicolon
\SetAlgoLined
\SetKwInOut{Input}{Input}\SetKwInOut{Output}{Output}
\Input{An ordered set $\stackrel{*}{T}$, an integer $t$ and a plain image $P$.}
\Output{Random numbers sequence $\beta_{\stackrel{*}{T}}(t,S_{\rm P})$.}
$Tr(j):= |{rk_{j1}}|+|{l_{j1}}|+|{k_{j2}}|$;   \;
$\beta_{\stackrel{*}{T}}(t,S_{\rm P})(j)=(Tr(j)+S_{\rm P})\pmod{256}$;\;


\caption{Generating the proposed pseudo-random sequence.}
\label{RN}
\end{algorithm}\vspace{12pt}

The proposed sequence $\beta_{\stackrel{*}{T}}(t,S_{\rm P})$ is cryptographically a good source of pseudo-randomness because triads are highly sensitive to the auxiliary parameters $(a,b)$~\cite{UmarHayat} and inverse detuning level $\delta$. It is shown in~\cite{Busta} that the intricate structure of clusters formed by triads depends on the chosen $\delta$, and the size of the clusters increases as the inverse detuning level increases. Moreover, the generation of triads is rapid due to the absence of modular operation.\\

\noindent\textbf{C. Performing diffusion.} To change the statistical properties of an input image, a diffusion process is performed. While performing the diffusion, the pixel values are changed using the sequence $\beta_{\stackrel{*}{T}}(t,S_{\rm P})$. Let $M_{\rm P}$ denote the diffused image for a plain image $P$. The proposed scheme alters the pixels of $P$ according to:
\begin{equation}\label{defus}
M_{\rm P}(i)=\beta_{\stackrel{*}{T}}(t,S_{\rm P})(i)+P(i)\pmod{256}.
\end{equation}
\textbf{D. Performing confusion.} A nonlinear function causes confusion in a cryptosystem, and nonlinear components are necessary for a secure data encryption scheme. The current scheme uses the dynamic S-boxes to produce the confusion in an encrypted image. If $C_{\rm P}$ stands for the encrypted image of $P$, then confusion is performed as follows:
\begin{equation}\label{confus}
C_{\rm P}(i)=\zeta_{E_{p}}(p,t,S_{\rm P})(M_{\rm P}(i)).
\end{equation}
\begin{lemma}
If $\#\mathcal{A}_{i}=n_{i},i=1,2,3$ and $p$ is a prime chosen for the generation of an S-box, then the time complexity of the proposed encryption scheme is max$\{\mathcal{O}(n_{1}n_{2}n_{3}), p^2\}$.
\end{lemma}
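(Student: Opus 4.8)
The plan is to analyze the time complexity by summing the costs of the three main computational stages of the scheme: generation of the pseudo-random sequence via triads (Algorithms~\ref{triad} and~\ref{RN}), generation of the S-box via the Mordell elliptic curve $E_p$ (Algorithm~\ref{SAL}), and the diffusion/confusion passes over the image (Equations~(\ref{defus}) and~(\ref{confus})). I~would argue that the overall running time is dominated by the larger of the first two stages, and that the image passes contribute only a lower-order term, so that the claimed bound $\max\{\mathcal{O}(n_{1}n_{2}n_{3}),\,p^{2}\}$ follows.

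First I would bound the cost of Algorithm~\ref{triad}. The first nested loop runs over $a\in\mathcal{A}_{1}$ and $b\in\mathcal{A}_{2}$, performing a constant amount of arithmetic (evaluating the fixed rational expressions~(\ref{pa1})--(\ref{pa3})) for each of the $n_{1}n_{2}$ pairs, so it costs $\mathcal{O}(n_{1}n_{2})$. The second nested loop runs over the $n_{1}n_{2}$ stored values and over $k_{3}\in\mathcal{A}_{3}$, again doing only constant work per iteration (a few multiplications, roundings, and the detuning/bound test), giving $\mathcal{O}(n_{1}n_{2}n_{3})$; note the termination condition $\#T = mn$ together with the public-parameter requirement $\prod n_i \ge mn$ means this loop is the binding one. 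The final sort of $T$, which has at most $mn$ elements, costs $\mathcal{O}(mn\log(mn))$, and since $mn\le n_{1}n_{2}n_{3}$ this is absorbed (up to the logarithmic factor, which is standard to suppress in such statements) into $\mathcal{O}(n_{1}n_{2}n_{3})$. Algorithm~\ref{RN} then does constant work per triad, i.e.\ $\mathcal{O}(mn)$, again dominated. Hence the triad-based pseudo-random generation runs in $\mathcal{O}(n_{1}n_{2}n_{3})$.

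Next I would bound Algorithm~\ref{SAL}. The outer loop runs over $x\in[0,p-1]$ and the inner loop over $y\in Y$ with $|Y|\le (p+1)/2$, and for each pair it checks the congruence $y^{2}\equiv x^{3}+c\pmod p$ in constant time (for fixed word size); the \texttt{break} after the first matching $y$ only helps, so a clean upper bound on the work is $\mathcal{O}(p^{2})$. Since for $p\equiv 2\pmod 3$ the curve has exactly $p+1$ points, the construction of the S-box from $B$ costs $\mathcal{O}(p)$, which is lower order. Finally, the diffusion and confusion steps each touch every pixel once with $\mathcal{O}(1)$ work (an addition mod $256$, respectively a table lookup), for a total of $\mathcal{O}(mn)$; using $mn\le n_{1}n_{2}n_{3}$ this too is dominated. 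Adding the three stages gives $\mathcal{O}(n_{1}n_{2}n_{3}) + \mathcal{O}(p^{2}) + \mathcal{O}(mn) = \mathcal{O}\!\left(\max\{n_{1}n_{2}n_{3},\,p^{2}\}\right)$, as claimed.

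The main obstacle, and the point I would be most careful about, is justifying the ``constant work per iteration'' claims: the rational expressions~(\ref{pa1})--(\ref{pa3}) and the modular congruence test are only $\mathcal{O}(1)$ under a fixed-word-size (unit-cost RAM) model, and the sort contributes a logarithmic factor that the stated bound quietly drops. I~would state explicitly that the complexity is measured in arithmetic operations on machine words and that logarithmic factors from sorting and from the size of the integers involved are being suppressed, so that the headline bound $\max\{\mathcal{O}(n_{1}n_{2}n_{3}),\,p^{2}\}$ is the intended asymptotic statement. With that convention fixed, the proof is just the bookkeeping above.
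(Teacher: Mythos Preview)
Your proposal is correct and follows essentially the same approach as the paper: bound the first nested loop of Algorithm~\ref{triad} by $\mathcal{O}(n_{1}n_{2})$, the second by $\mathcal{O}(n_{1}n_{2}n_{3})$, bound Algorithm~\ref{SAL} by $\mathcal{O}(p^{2})$, and take the maximum. Your version is in fact more thorough than the paper's, which omits the sorting cost, the diffusion/confusion passes, and the unit-cost model caveats you explicitly address.
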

\begin{proof}
The computation of all possible values of $k_{1}',l_{3}'$ and $l_{1}'$ in Algorithm~\ref{triad} takes $\mathcal{O}(n_{1}n_{2})$ time. Similarly the time complexity for generating $\stackrel{*}{T}$ is $\mathcal{O}(c_{1}n_{3})$ but $c_{1}$ executes $n_{1}n_{2}$ times. Thus the time required by $\stackrel{*}{T}$ and hence by  $\beta_{\stackrel{*}{T}}(t,S_{\rm P})$ is $\mathcal{O}(n_{1}n_{2}n_{3})$. Additionally, Algorithm~\ref{SAL} shows that the proposed S-box can be constructed in $\mathcal{O}(p^2)$ time. Thus the time complexity of the proposed scheme is max$\{\mathcal{O}(n_{1}n_{2}n_{3}), p^2\}$.\\
\end{proof}

\noindent \textbf{Example 1.}
\emph{{In order to have a clear picture of the proposed cryptosystem, we explain the whole procedure using the following hypothetical $4\times 4$ image. For example, let $I$ represent the plain image of Lena$_{256\times256}$, and let $P$ be the subimage of $I$ consisting of the intersection of the first four rows and the first four columns of $I$ as shown in Table~\ref{P44}, whereas the column-wise linearly ordered image $P$ is shown in Table~\ref{LOO}.}}
\begin{table}[H]
\caption{{Plain image $P$.}}
\label{P44}%
\centering
\begin{tabular}{cccc}
\toprule
 162 & 162 & 162 & 163 \\
162 & 162 & 162 & 163 \\
162 & 162 & 162 & 163 \\
160 & 163 & 160 & 159\\ \bottomrule
\end{tabular}
\end{table}\vspace{-18pt}
\begin{table}[H]
\caption{{Linear ordering of image $P$.}}
\label{LOO}%
\centering
\begin{tabular}{cccc}
\toprule
$P(1)$ & $P(5)$ & $P(9)$ & $P(13)$ \\
$P(2)$ & $P(6)$ & $P(10)$ & $P(14)$ \\
$P(3)$ & $P(7)$ & $P(11)$ & $P(15)$ \\
$P(4)$ & $P(8)$ & $P(12)$ & $P(16)$\\ \bottomrule
\end{tabular}
\end{table}
\noindent {We have $S_{\rm P}=2589$ and $c=247$ and the values of other parameters are described in Section~\ref{Analysis}. The~corresponding $16$ triads are obtained by Algorithm~\ref{triad} as shown in Table~\ref{T16}.}
\begin{table}[H]
\caption{{The corresponding set $\stackrel{*}{T}$ for image $P$.}}
\label{T16}
\centering
\begin{tabular}{cccccccccccccc}
\toprule
\boldmath{$\Delta_{j}$}& \boldmath{$k_{1}$} & \boldmath{$l_{1}$} & \boldmath{$k_{2}$}& \boldmath{$l_{2}$} & \boldmath{$k_{3}$} &  \boldmath{$l_{3}$}  & \boldmath{$\Delta_{j}$} & \boldmath{$k_{1}$} & \boldmath{$l_{1}$} & \boldmath{$k_{2}$}& \boldmath{$l_{2}$} & \boldmath{$k_{3}$ }&  \boldmath{$l_{3}$}     \\
\midrule
$\Delta_{1}$ & $-$1128 & 1152 & 1529 & 668 & 401 & 1820 & $\Delta_{9}$ & $-$1240 & 1267 & 1681 & 735 & 441 & 2002 \\
$\Delta_{2}$ & $-$1142 & 1167 & 1548 & 676 & 406 & 1843 & $\Delta_{10}$ & $-$1254 & 1282 & 1700 & 743 & 446 & 2025 \\
$\Delta_{3}$ & $-$1156 & 1181 & 1567 & 685 & 411 & 1866 & $\Delta_{11}$ & $-$1268 & 1296 & 1719 & 751 & 451 & 2047 \\
$ \Delta_{4}$& $-$1170 & 1195 & 1586 & 694 & 416 & 1889 & $\Delta_{12}$ & $-$1282 & 1310 & 1738 & 760 & 456 & 2070 \\
$\Delta_{5}$ & $-$1184 & 1210 & 1605 & 701 & 421 & 1911 & $\Delta_{13}$ & $-$1296 & 1325 & 1757 & 768 & 461 & 2093 \\
$\Delta_{6}$ & $-$1198 & 1224 & 1624 & 710 & 426 & 1934 & $\Delta_{14}$ & $-$1310 & 1339 & 1776 & 776 & 466 & 2115 \\
$\Delta_{7}$ & $-$1212 & 1238 & 1643 & 719 & 431 & 1957 & $\Delta_{15}$ & $-$1325 & 1353 & 1796 & 785 & 471 & 2138 \\
$\Delta_{8}$ & $-$1226 & 1253 & 1662 & 726 & 436 & 1979 & $\Delta_{16}$ & $-$1339 & 1368 & 1815 & 793 & 476 & 2161\\
\bottomrule
\end{tabular}
\end{table}
From $S_{\rm P}=2589$ and $t=2$, it follows that $r=1295$ and hence by application of Algorithm~\ref{RN} the terms of $\beta_{\stackrel{*}{T}}(2,2589)$  are listed in Table~\ref{seq}. Moreover, the S-box $\zeta_{E_{293}}(293,2,2589)$ is constructed by Algorithm~\ref{SAL}, giving the mapping $\zeta_{E_{293}}(293,2,2589):\{0,1,\ldots, 255\}\rightarrow \{0,1,\ldots, 255\}$, which maps the list $(0,\ldots, 255)$ to the list $(80, 213, 29, 113, 180, 2, 119, 174, 10, 103, 190, 120, 173, 99, 194, 126,167, 42, 251, 78, 215, 84, 209, 93, 200, 130,$
$163, 32, 17, 117, 176, 62, 231, 110, 183, 56, 237, 75, 218, 127, 166, 73, 220, 13, 91, 202, 28, 129, 164, 118, 175, 69,$ $224, 50, 243, 100, 193, 137, 156, 89, 204, 12, 63, 230, 74, 219, 4, 131, 162, 134, 159, 123, 170, 90, 203, 70, 223, 87,$ $206, 59, 234, 145, 148, 58, 235, 57, 236, 65, 228, 15, 112, 181, 52, 241, 76, 217, 60, 233, 121, 172, 68, 225, 51, 242,$ $135, 158, 41, 252, 21, 142, 151, 26, 25, 40, 253, 96, 197, 136, 157, 9, 116, 177, 122, 171, 45, 248, 115, 178, 102, 191,$  $67, 226, 95, 198, 143, 150, 133, 160, 98, 195, 3, 94, 199, 30, 104, 189, 132, 161, 8, 64, 229, 144, 149, 140, 153, 14,$ $85, 208, 20, 6, 109, 184, 125, 168, 92, 201, 19, 53, 240, 31, 66, 227, 35, 82, 211, 108, 185, 139, 154, 33, 16, 86, 207,$ $128, 165, 5, 71, 222, 38, 255, 23, 0, 81, 212, 1, 141, 152, 111, 182, 138, 155, 49, 244, 22, 106, 187, 105, 188, 36, 54,$ $239, 46, 247, 43, 250, 97, 196, 27, 11, 24, 44, 249, 83, 210, 61, 232, 39, 254, 7, 72, 221, 77, 216, 47, 246, 107, 186,$ $48, 245, 55, 238, 124 169, 34, 79, 214, 88, 205, 114, 179, 37, 18, 146, 147, 101, 192)$. 
\begin{table}[H]
\caption{{Pseudo-random sequence for plain image $P.$}}
\label{seq}%
\centering
\begin{tabular}{cccc}
\toprule
$\beta_{\stackrel{*}{T}}(2,2589)(1)=188$ & $\beta_{\stackrel{*}{T}}(2,2589)(5)=126$ & $\beta_{\stackrel{*}{T}}(2,2589)(9)=65$   & $\beta_{\stackrel{*}{T}}(2,2589)(13)=3$   \\ \midrule
$\beta_{\stackrel{*}{T}}(2,2589)(2)=108$ & $\beta_{\stackrel{*}{T}}(2,2589)(6)=47$  & $\beta_{\stackrel{*}{T}}(2,2589)(10)=241$ & $\beta_{\stackrel{*}{T}}(2,2589)(14)=180$ \\ \midrule
$\beta_{\stackrel{*}{T}}(2,2589)(3)=29$  & $\beta_{\stackrel{*}{T}}(2,2589)(7)=224$ & $\beta_{\stackrel{*}{T}}(2,2589)(11)=162$ & $\beta_{\stackrel{*}{T}}(2,2589)(15)=115$ \\ \midrule
$\beta_{\stackrel{*}{T}}(2,2589)(4)=206$ & $\beta_{\stackrel{*}{T}}(2,2589)(8)=144$ & $\beta_{\stackrel{*}{T}}(2,2589)(12)=83$  & $\beta_{\stackrel{*}{T}}(2,2589)(16)=35$
\\
 \bottomrule
\end{tabular}
\end{table}

{Hence by the respective application of Equation~(\ref{defus}) and the S-box $\zeta_{E_{293}}(293,2,2589)$, the pixel values of diffused image $M_{\rm P}$ and encrypted image $C_{\rm P}$ are shown in Tables~\ref{Mk} and  \ref{En}}, respectively.
\begin{table}[H]
\caption{{Diffused image $M_{\rm P}.$}}
\label{Mk}%
\centering
\begin{tabular}{cccc}
\toprule
94  & 32  & 227 & 166 \\
14  & 209 & 147 & 87  \\
191 & 130 & 68  & 22  \\
110 & 51  & 243 & 194
\\ \bottomrule
\end{tabular}
\end{table}\vspace{-18pt}
\begin{table}[H]
\caption{{Encrypted image $C_{\rm P}.$}}
\label{En}%
\centering
\begin{tabular}{cccc}
\toprule
76  & 231 & 254 & 19  \\
194 & 54  & 161 & 65  \\
0   & 67  & 162 & 209 \\
151 & 69  & 34  & 1
\\ \bottomrule
\end{tabular}
\end{table}

\subsection{Decryption} In our scheme the decryption process can take place by reversing the operations of the encryption process. One should know the inverse S-box $\zeta^{-1}_{E_{p}}(n,t,S_{\rm P})$ and the pseudo-random numbers $\beta_{\stackrel{*}{T}}(t,S_{\rm P})$. Assume the situation when the secret keys $a_{1}, b_{1}, a_{2}, b_{2}, a_{3},\delta$, $L,S_{\rm P},t$ and $p$ are transmitted by a secure channel, so that the set $\stackrel{*}{T}$ is obtained using keys $a_{1}, b_{1}, a_{2}, b_{2}, a_{3},\delta$ and $L$, and hence the S-box $\zeta^{-1}_{E_{p}}(p,t,S_{\rm P})$ and the pseudo-random numbers $\beta_{\stackrel{*}{T}}(t,S_{\rm P})$ can be computed by  $S_{\rm P},t$ and $p$. Finally, the~receiver gets the original image $P$ by applying the following equations:
\begin{align}
&M_{\rm P}(i)=\zeta_{E_{p}}^{-1}(p,t,S_{\rm P})(C_{\rm P}(i)),\\
\begin{split}
P(i)=M_{\rm P}(i)-\beta_{\stackrel{*}{T}}(t,S_{\rm P})(i)\pmod{256}.
\end{split}
\end{align}
\section{Security Analysis}\label{SEAnalysis}
{In this section the cryptographic strength of both the S-box construction technique and encryption scheme are analyzed in detail.}
\subsection{Evaluation of the Designed S-Box}\label{Sb1}
{An S-box with good cryptographic properties ensures the quality of an encryption technique. Generally, some standard tests such as nonlinearity (NL), linear approximation probability (LAP), strict avalanche criterion (SAC), bit independence criterion (BIC) and differential approximation probability (DAP) are used to evaluate the cryptographic strength of an S-box.}

{The NL~\cite{Ad90} and the LAP~\cite{M94} are outstanding features of an S-box, used to measure the resistance against linear attacks. The NL measures the level of nonlinearity and the LAP finds the maximum imbalance value of an S-box. The optimal value of the nonlinearity is $112$. A low value of LAP corresponds to a high resistance. The minimum NL and the LAP values for the displayed S-box are $106$ and $0.1484$, respectively. This ensures that the proposed S-box is immune to  linear attacks. Webster and Tavares~\cite{We85} developed the concepts of the SAC and the BIC, which are used to find the confusion and diffusion creation potential of an S-box. In other words, the SAC criterion measures the change in output bits when an input bit is altered. Similarly, the BIC criterion explores the correlation in output bits when change in a single input bit occurs. The average values of the SAC and the BIC for the constructed S-box are $0.4951$  and $0.4988$, respectively, which are close to the optimal value $0.5$. Thus, both tests are satisfied by the suggested S-box. The DAP~\cite{Bi91} is another important feature used to analyze the capability of an S-box against differential attacks. The lowest value of DAP for an S-box implies the highest security to the differential attacks. Our DAP result is $0.0234$, which is good enough to resist differential cryptanalysts.}
\subsection{Performance Comparison of the S-Box Generation Algorithm}\label{Sb2}
\textls[-15]{{After performing the rigorous analyses, the S-box constructed by the current algorithm is compared with some cryptographically strong S-boxes developed by recent schemes, as shown in Table~\ref{Scom}.}}
\begin{table}[H]
\caption{{Comparison table of the proposed S-box $\zeta_{E_{1607}}(1607,182,0)$}.}
\label{Scom}
\centering
\resizebox{\columnwidth}{!}
{
\bgroup
\def\arraystretch{1.1}
\begin{tabular}{cccccccccc}
    \toprule
\textbf{S-Boxes}&\textbf{NL}&\textbf{LAP}& \multicolumn{3}{c}{\textbf{SAC}}& \multicolumn{3}{c}{\textbf{BIC}}&\textbf{DAP}\\
\cmidrule{4-9}
       &  &   & \textbf{(min)}& \textbf{(avg)} &\textbf{(max)}&\textbf{(min)}& \textbf{(avg)} &\textbf{(max)}  & \\ \midrule
Ours& 106 & 0.1484375 & 0.390625 & 0.49511719 & 0.609375 & 0.47265625 & 0.49888393 & 0.52539063 & 0.0234375\\
Ref.~\cite{signal}& 104   & 0.1484375 &0.421900          &   -       & 0.6094    &0.4629     &            -    &  0.5430         & 0.0469 \\
Ref.~\cite{CM1} & 104 & 0.1328125 & 0.40625  & 0.49755859 & 0.625    & 0.46679688 & 0.50223214 & 0.5234375  & 0.0234375 \\
Ref.~\cite{CM2}  & 101 & 0.140625  & 0.421875 & 0.49633789 & 0.578125 & 0.46679688 & 0.49379185 & 0.51953125 &  0.03125\\
Ref.~\cite{CM3}  & 104 & 0.140625  & 0.421875 & 0.50390625 & 0.59375  & 0.4765625  & 0.50585938 & 0.5390625  &0.0234375  \\
Ref.~\cite{CM4}  & 100 & 0.140625  & 0.40625  & 0.50097656 & 0.609375 & 0.44726563 & 0.50634766 & 0.53320313 & 0.03125 \\
Ref.~\cite{CM5}  & 106 & 0.140625  & 0.390625 & 0.49414063 & 0.609375 & 0.47070313 & 0.50132533 & 0.53320313 & 0.0234375 \\
Ref.~\cite{CM6}  & 102 & 0.140625  & 0.421875 & 0.49804688 & 0.640625 & 0.4765625  & 0.50746373 & 0.53320313 & 0.0234375 \\
Ref.~\cite{CM7} &104     &0.0391        &0.3906       &  -          &0.6250         &0.4707            & -           & 0.53125           & 0.0391 \\
Ref.~\cite{CM8} & 104  & 0.0547000          & 0.4018     & 0.4946        & 0.5781     &0.4667969            & 0.4988839           &  0.5332031          & 0.0391 \\
Ref.~\cite{Alzaidi}&108&0.1328  &0.40625&0.4985352&0.59375 &0.46484375&0.5020229&0.52734375&0.0234375\\ \bottomrule
\end{tabular}
\egroup
}
\end{table}
{From Table~\ref{Scom} it follows that the NL of $\zeta_{E_{1607}}(1607,182,0)$ is greater than the S-boxes in~\cite{signal,CM1,CM2,CM3,CM4,CM6,CM7,CM8}, equal to that of~\cite{CM5} and less than the S-box developed in~\cite{Alzaidi}, which indicates that $\zeta_{E_{1607}}(1607,182,0)$ is highly nonlinear in comparison to the S-boxes in~\cite{signal,CM1,CM2,CM3,CM4,CM6,CM7,CM8}. Additionally, the LAP of $\zeta_{E_{1607}}(1607,182,0)$ is comparable to all the S-boxes in Table~\ref{Scom}. The SAC (average) value of $\zeta_{E_{1607}}(1607,182,0)$ is greater than the S-boxes in~\cite{CM5,CM8}, and the SAC (max) value is less than or equal to the S-boxes in~\cite{signal,CM1,CM4,CM5,CM6,CM7}. Similarly the BIC (min) value of $\zeta_{E_{1607}}(1607,182,0)$ is closer to the optimal value $0.5$ than that of~\cite{signal,CM1,CM2,CM4,CM5,CM7,CM8,Alzaidi}, and the BIC (max) value of the new S-box is better than that of the S-boxes in~\cite{signal,CM3,CM4,CM5,CM6,CM7,CM8,Alzaidi}. Thus the confusion/diffusion creation capability of $\zeta_{E_{1607}}(1607,182,0)$ is better than ~\cite{signal,CM4,CM5,CM6,CM7,Alzaidi}. The DAP value of our suggested S-box $\zeta_{E_{1607}}(1607,182,0)$ is lower than the DAP of the S-boxes presented in~\cite{signal, CM2,CM4,CM7,CM8} and equal to that of~\cite{CM1,CM3,CM5,CM6,Alzaidi}. Thus from the above discussion it follows that the newly designed S-box shows high resistance to linear as well as differential attacks.}
\subsection{Evaluation of the Proposed Encryption Technique}\label{Analysis}
\textls[-15]{In this section
 the current scheme is implemented on all gray images of the USC-SIPI Image Database~\cite{Dbase}. The USC-SIPI database contains images of size $m \times m$, $m$ = 256,512,1024. Furthermore, some security analyses that are explained one by one in the associated subsections are presented. To validate the quality of the proposed scheme, the experimental results are compared with some other encryption schemes. The parameters used for the experiments are $A_{1}=A_{2}=-1.0541, A_{3}=401, B_{1}=B_{2}=-0.8514$ and $B_{3}=691,3036,5071$ \rm{for} $m$ = 256,512,1024, \rm{respectively}; $a_{1}=2,b_{1}=1000, a_{2}=19,b_{2}=1000, a_{3}=5,\delta=1000,t=2,p=293,L$ = 90,000 and $S_{\rm P}$ varies for each $P$. The experiments were performed using Matlab R$2016$a on a personal computer with a $1.8$ GHz Processor and $6$ GB RAM. All encrypted images of the database along with histograms are available at~\cite{github}. Some plain images, House$_{256\times 256}$, Stream$_{512\times 512}$, Boat$_{512\times 512}$ and Male$_{1024\times 1024}$ and their cipher images are displayed in Figure~\ref{fig:encryp}.}
\begin{figure}[H]
\captionsetup[subfigure]{justification=centering}
\centering
\begin{subfigure}[b]{0.24\textwidth}
	\centering
	 \includegraphics[scale=0.198]{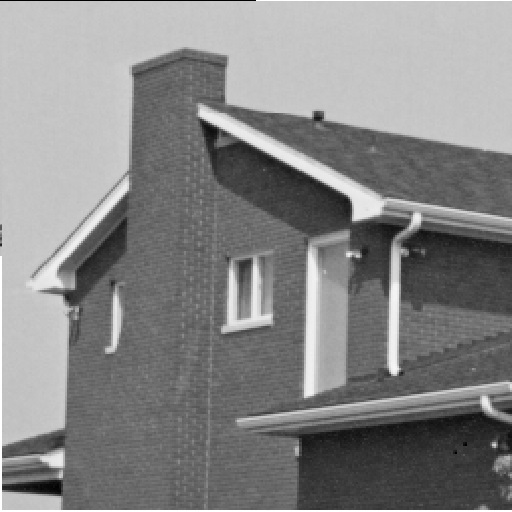}
	 \caption{ }
	 \end{subfigure}
\begin{subfigure}[b]{0.24\textwidth}
	\centering
	 \includegraphics[scale=0.198]{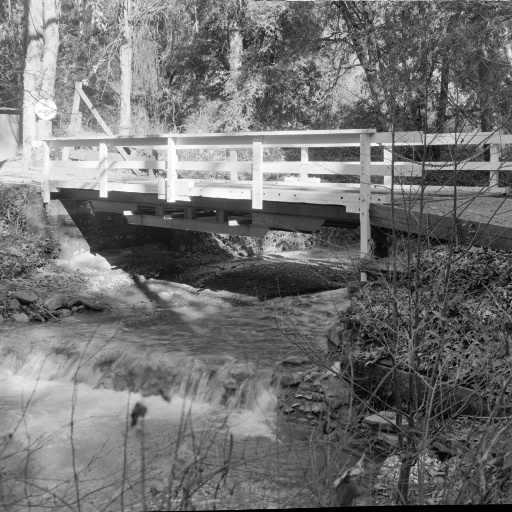}	
      \caption{ }
	 \end{subfigure}
	\begin{subfigure}[b]{0.24\textwidth}
	\centering
	 \includegraphics[scale=0.198]{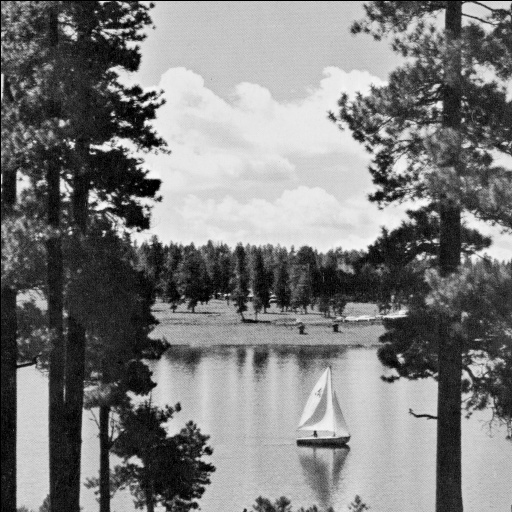}
	 \caption{}
	 \end{subfigure}
	 \begin{subfigure}[b]{0.24\textwidth}
	\centering
	 \includegraphics[scale=0.198]{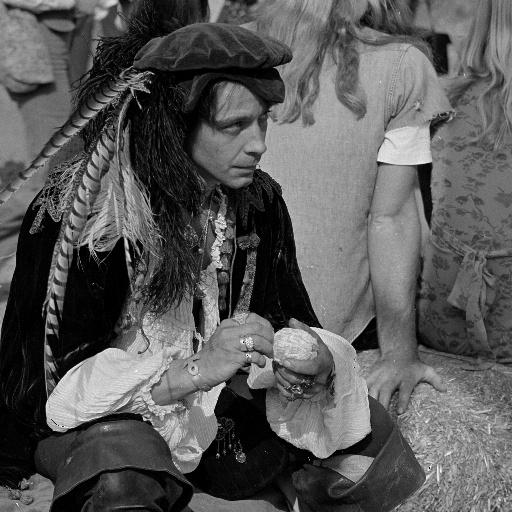}
	 \caption{ }
	 \end{subfigure}
\bigskip\\
\begin{subfigure}[b]{0.24\textwidth}
	\centering
	 \includegraphics[scale=0.198]{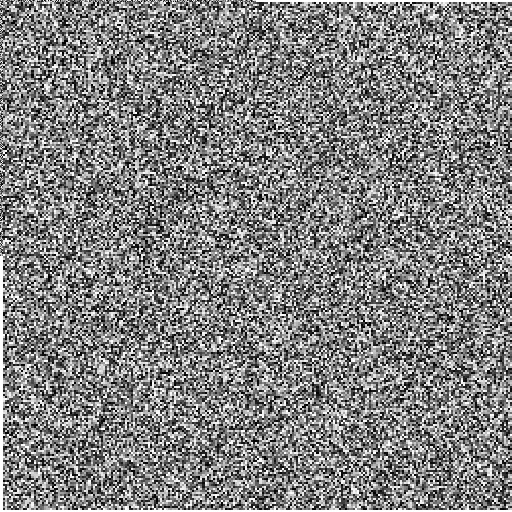}
	 \caption{}
	 \end{subfigure}
\begin{subfigure}[b]{0.24\textwidth}
	\centering
	 \includegraphics[scale=0.198]{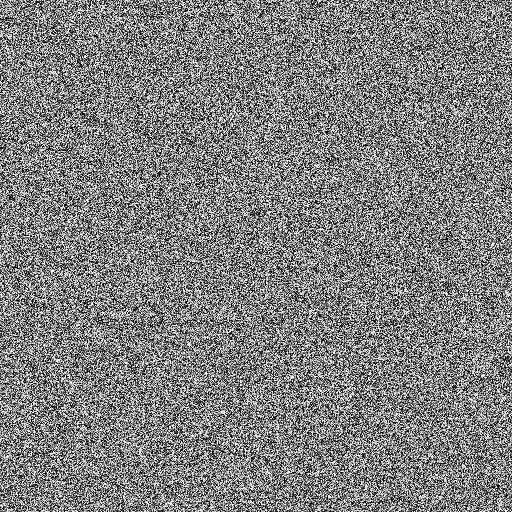}
	 \caption{ }
	 \end{subfigure}
	 \begin{subfigure}[b]{0.24\textwidth}
	\centering
	 \includegraphics[scale=0.198]{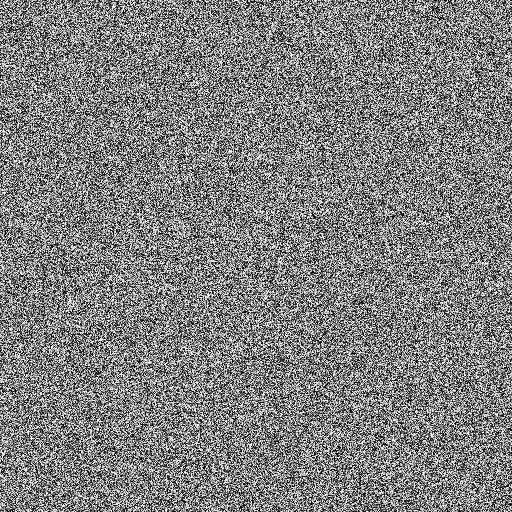}
	 \caption{}
	 \end{subfigure}
\begin{subfigure}[b]{0.24\textwidth}
	\centering
	 \includegraphics[scale=0.198]{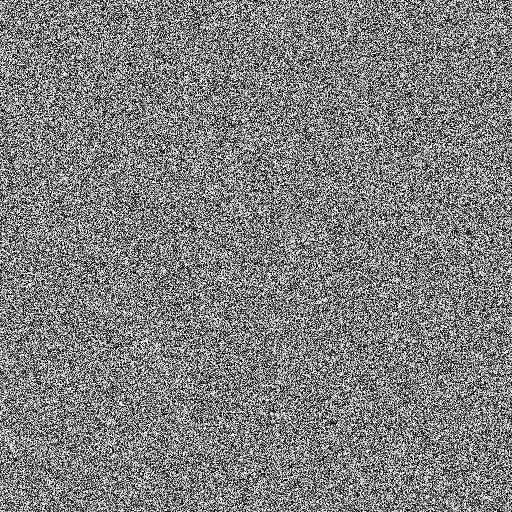}
	 \caption{ }
	 \end{subfigure}

\caption{(\textbf{a})–(\textbf{d})~Plain images House, Stream, Boat and Male;
(\textbf{e})–(\textbf{h})~cipher images of the plain images~(\textbf{a})–(\textbf{d}), respectively.
 }
\label{fig:encryp}
\end{figure}
\subsubsection{Statistical Attack}
A cryptosystem is said to be secure if it has high resistance against statistical attacks. The strength of resistance against statistical attacks is measured by entropy, correlation and histogram tests. All of these tests are applied to evaluate the performance of the discussed scheme.
\begin{enumerate}
\item[(1)] \textls[-15]{Histogram. A histogram is a graphical way to display the frequency distribution of pixel values of an image. A secure cryptosystem generates cipher images with uniform histograms. The histograms of the encrypted images using the proposed method are available at~\cite{github}.
 However, the respective histograms for the images in Figure~\ref{fig:encryp} are shown in Figure~\ref{fig:Hist}. The histograms of the encrypted images are almost uniform. Moreover, the histogram of an encrypted image is totally different from that of the respective plain image, so that it does not allow useful information to the adversaries, and the proposed algorithm can resist any statistical attack.}
    \begin{figure}[H]
\captionsetup[subfigure]{justification=centering}
\centering
\begin{subfigure}[b]{0.24\textwidth}
	\centering
	 \includegraphics[scale=0.28]{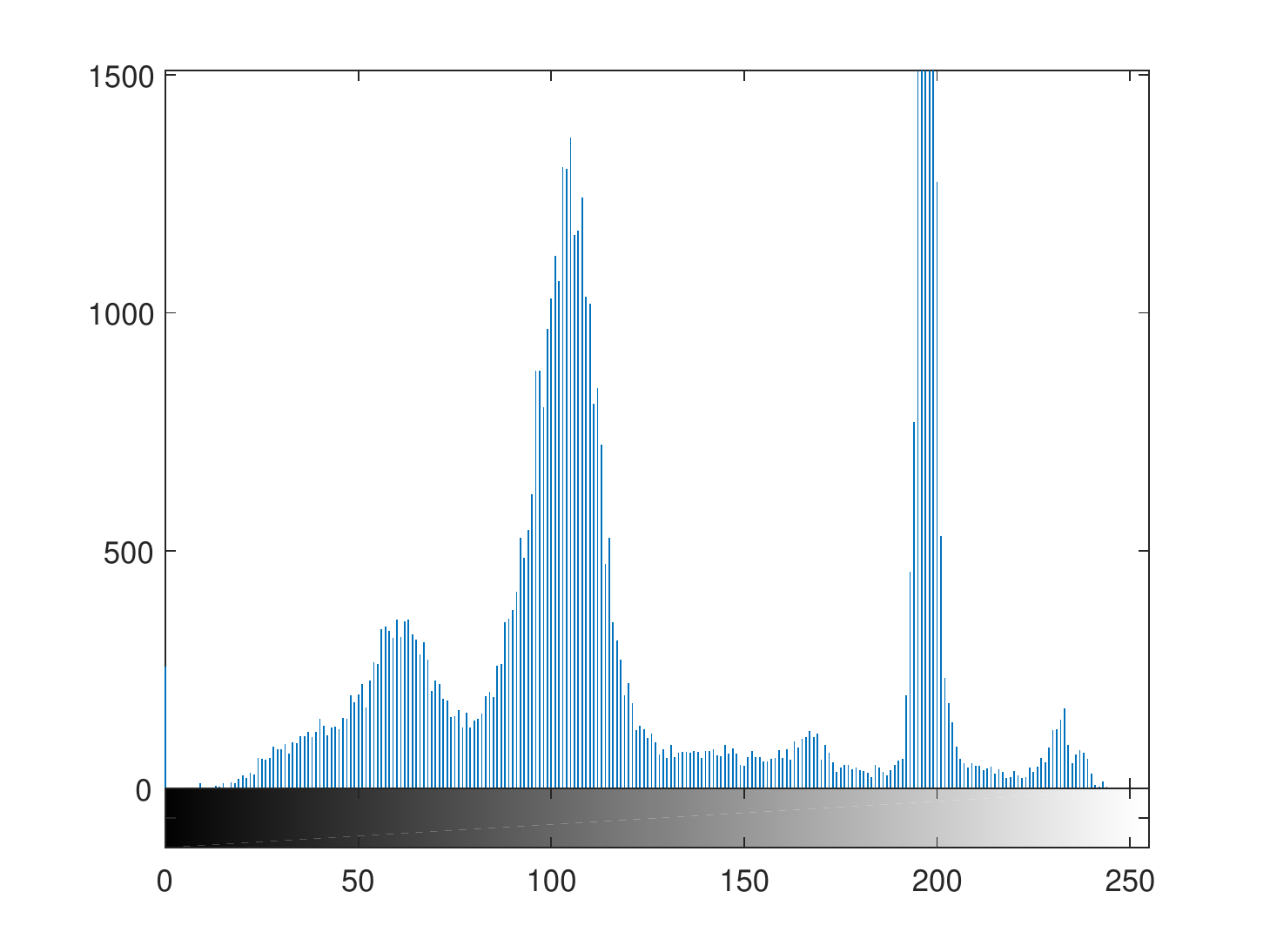}
	 \caption{ }
	 \end{subfigure}
\begin{subfigure}[b]{0.24\textwidth}
	\centering
	 \includegraphics[scale=0.28]{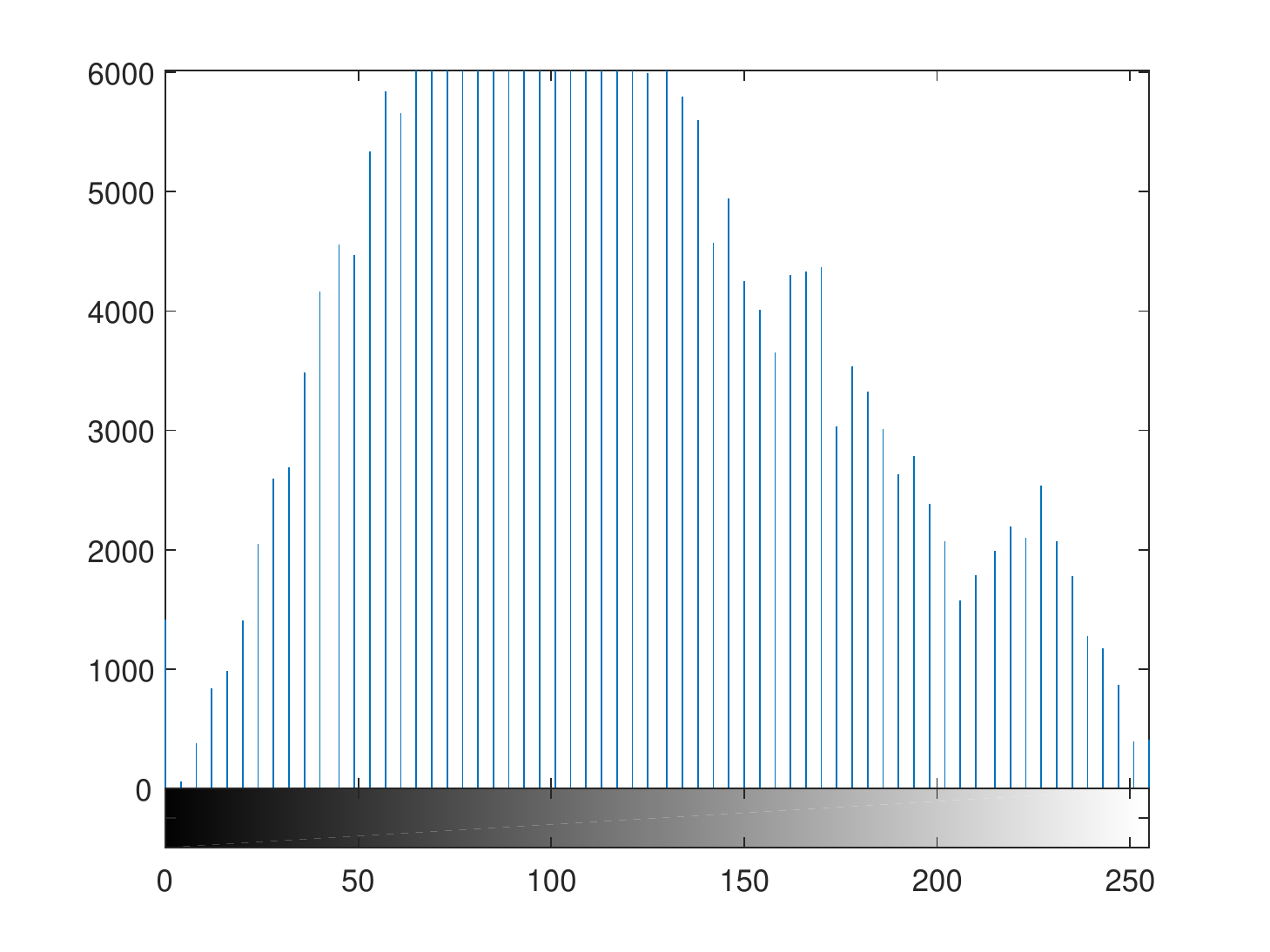}	
      \caption{ }
	 \end{subfigure}
	\begin{subfigure}[b]{0.24\textwidth}
	\centering
	 \includegraphics[scale=0.28]{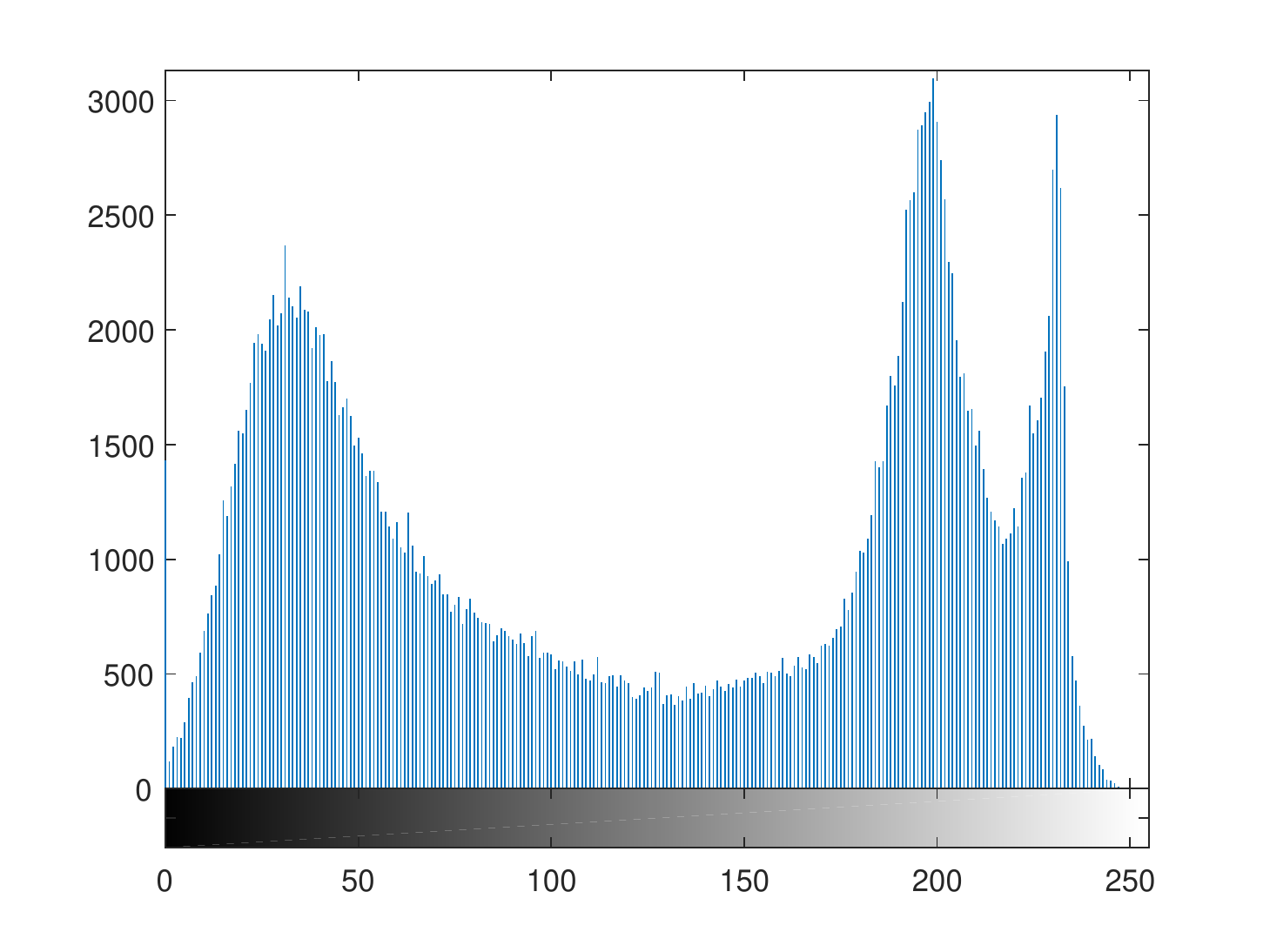}
	 \caption{}
	 \end{subfigure}
	 \begin{subfigure}[b]{0.24\textwidth}
	\centering
	 \includegraphics[scale=0.28]{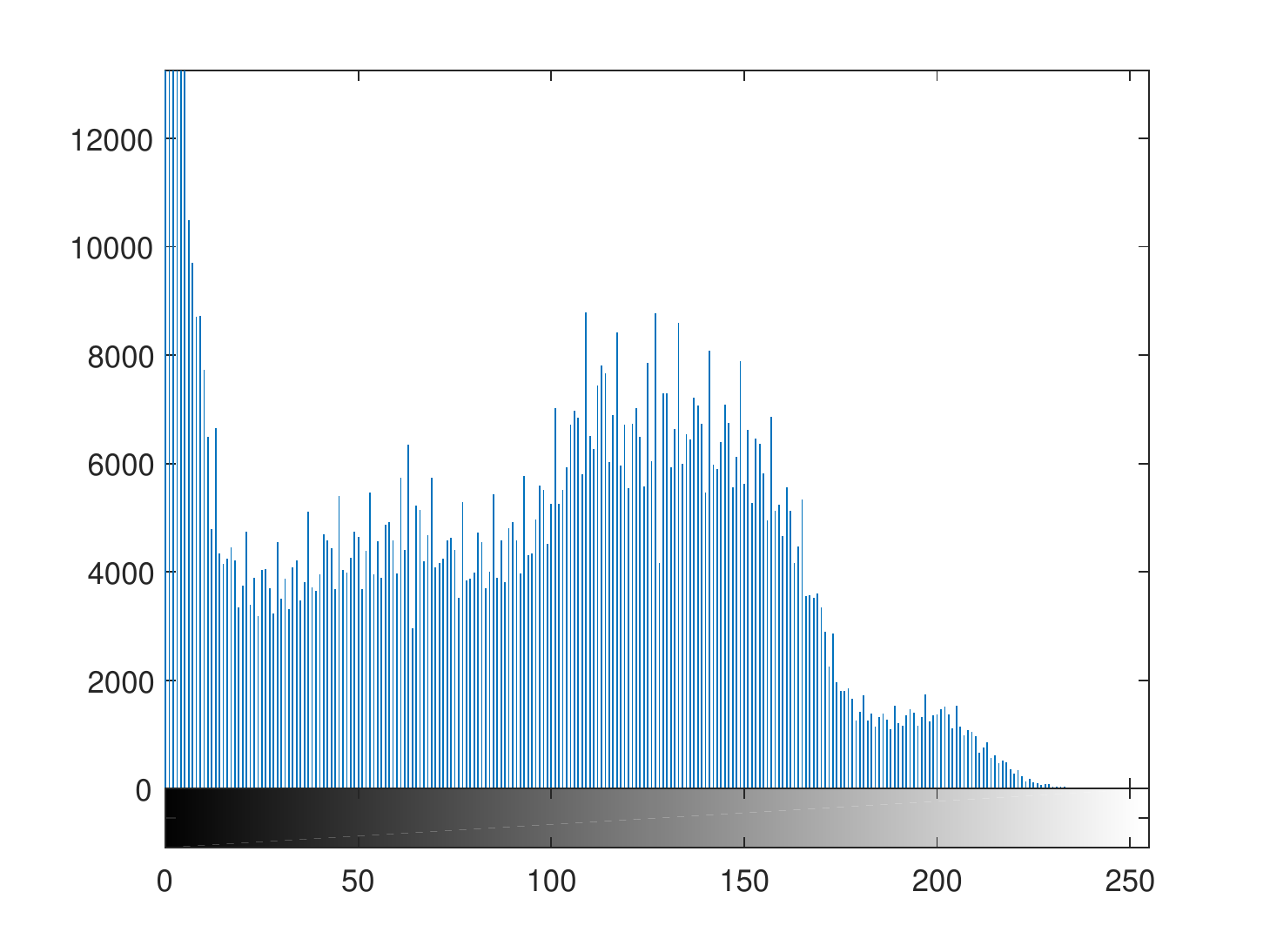}
	 \caption{ }
	 \end{subfigure}
\bigskip\\
\begin{subfigure}[b]{0.24\textwidth}
	\centering
	 \includegraphics[scale=0.28]{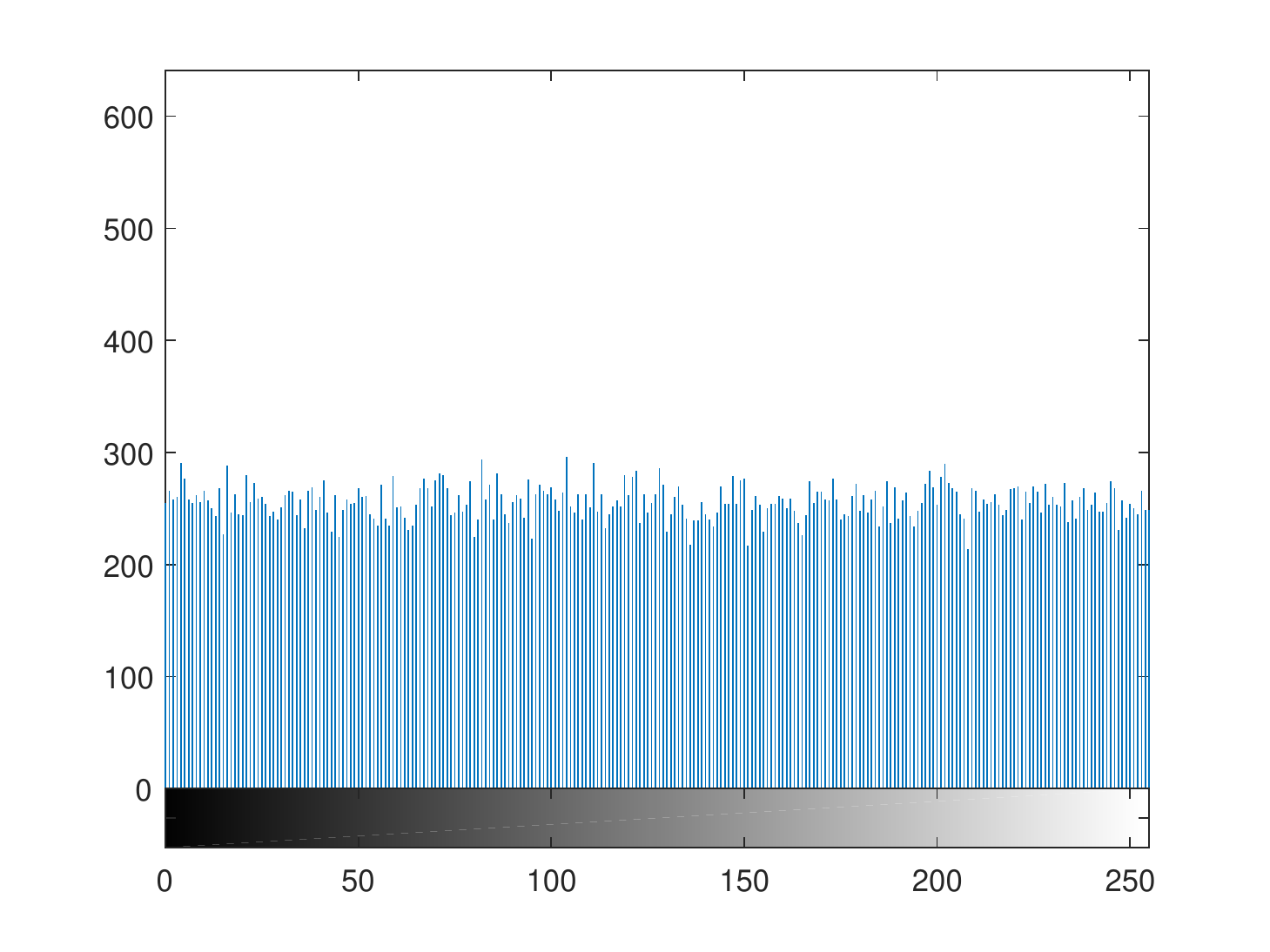}
	 \caption{}
	 \end{subfigure}
\begin{subfigure}[b]{0.24\textwidth}
	\centering
	 \includegraphics[scale=0.28]{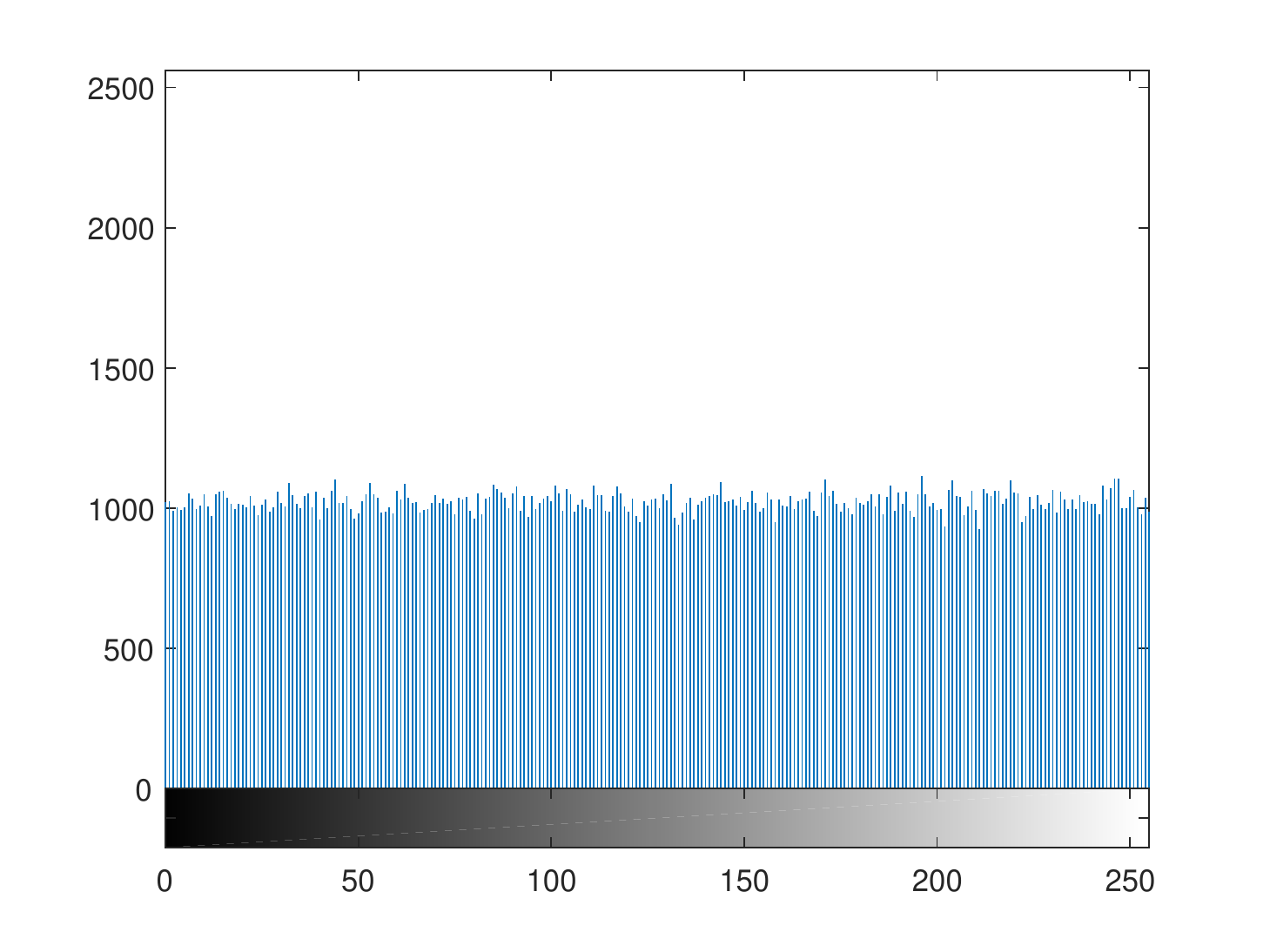}
	 \caption{ }
	 \end{subfigure}
	 \begin{subfigure}[b]{0.24\textwidth}
	\centering
	 \includegraphics[scale=0.28]{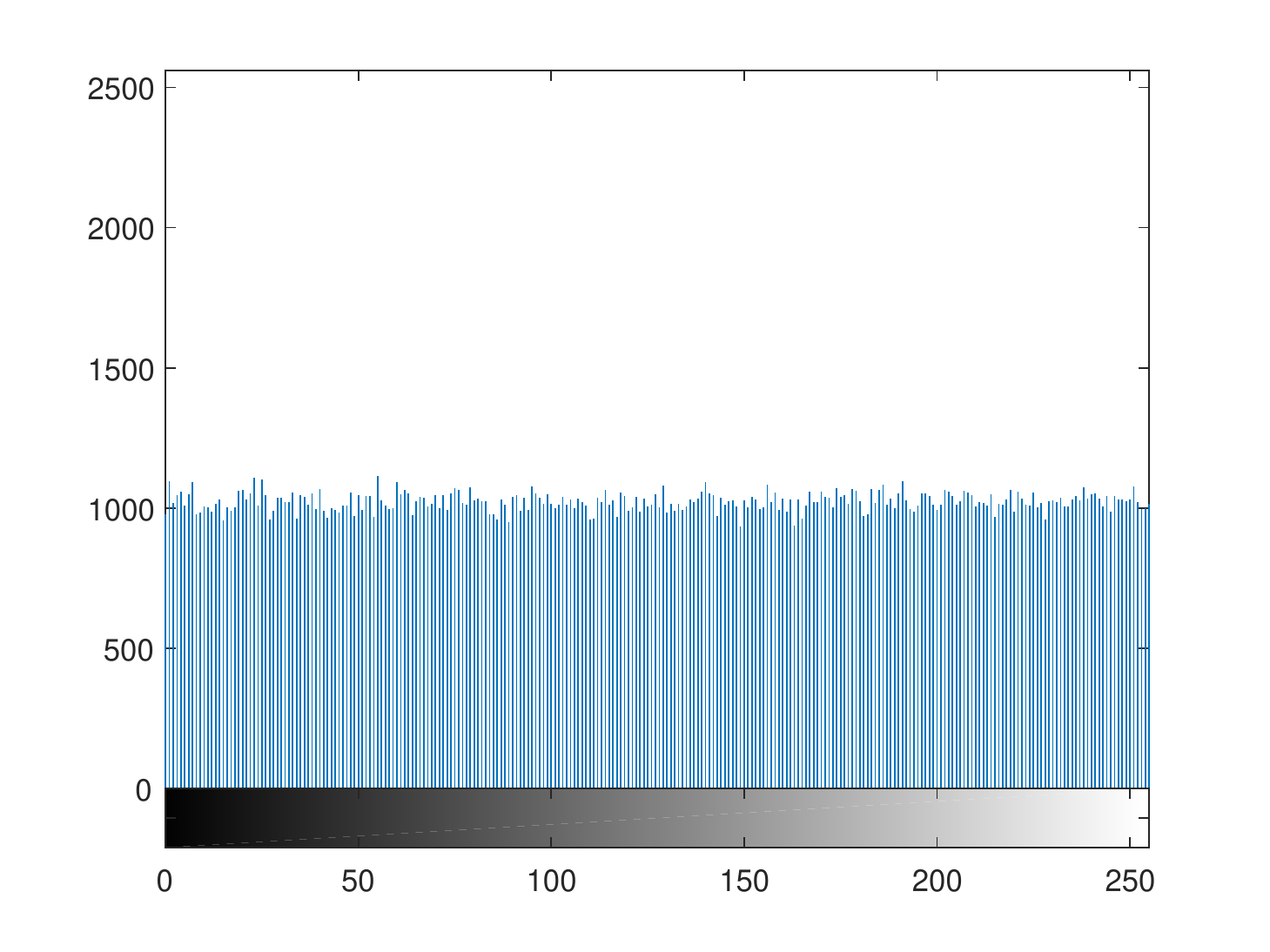}
	 \caption{}
	 \end{subfigure}
\begin{subfigure}[b]{0.24\textwidth}
	\centering
	 \includegraphics[scale=0.28]{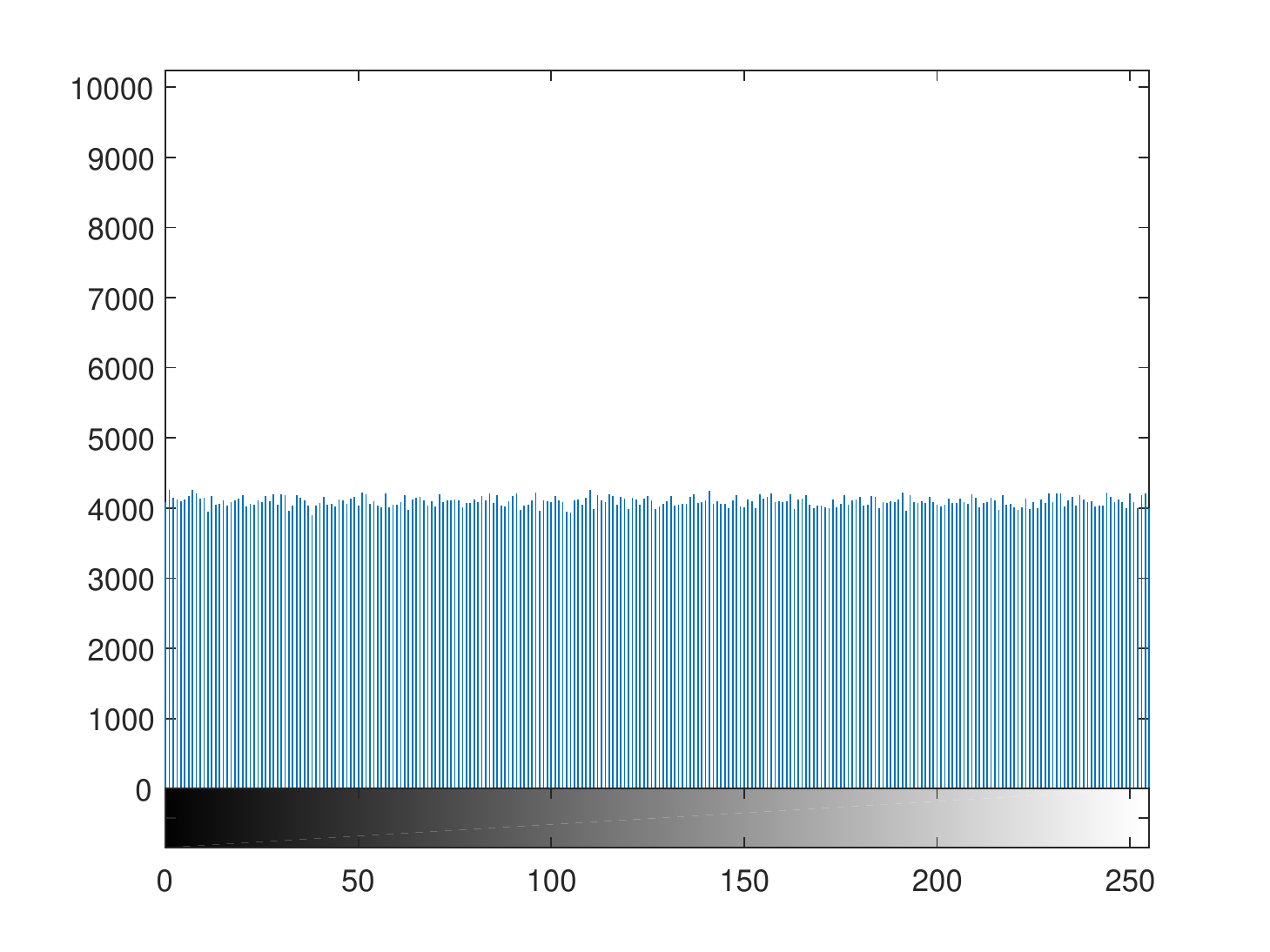}
	 \caption{ }
	 \end{subfigure}

\caption{(\textbf{a})--(\textbf{d})~Histograms of Figure~\ref{fig:encryp}(\textbf{a})--(\textbf{d});
(\textbf{e})--(\textbf{h}) histograms of Figure~\ref{fig:encryp}(\textbf{e})--(\textbf{h}).}
\label{fig:Hist}
\end{figure}
\item[(2)] Entropy. Entropy is a standout feature to measure the disorder. Let $I$ be a source of information over a set of symbols $N$. Then the entropy of $I$ is defined by:
    \begin{equation}
    {\rm H}(I) = \sum_{i=1}^{\#N} p(I_{i}){\rm log}_{2}\frac{1}{p(I_{i})},
    \end{equation}
where $p(I_{i})$ is the probability of occurrence of symbol $i.$ The ideal value of ${\rm H}(I)$ is ${\rm log}_{2}{(\#N)}$, if~all symbols of $N$ occur in $I$ with the same probability. Thus, an image $I$ emanating $256$ gray levels is highly random if ${\rm H}(I)$ is close to $8$ (notice, however, that this definition of entropy does not take into account pixel correlations). The entropy results for all images encrypted by the suggested technique are shown in Figure~\ref{fig:Corr}, where the minimum, average and maximum values are $7.9966,7.9986$ and $7.9999$, respectively. These results are close to $8$, and hence the developed mechanism is secure against entropy attacks.
\item[(3)] Pixel correlation. A meaningful image has strong correlation among the adjacent pixels. In fact, a~good cryptosystem has the ability to break the pixel correlation and bring it close to zero. For any two gray values $x$ and $y$, the pixel correlation can be computed as:
    \begin{equation}
    C_{xy}=\frac{E\big[(x-E[x])(y-E[y])\big]}{\sqrt{K[x]K[y]}},
    \end{equation}
where $E[x]$ and $K[x]$ denote expectation and variance of $x$, respectively. The range of $C_{xy}$ is $-1$ to $1$. The gray values $x$ and $y$ are in low correlation if $C_{xy}$ is close to zero. As the pixels may be adjacent in horizontal, diagonal and vertical directions, the correlation coefficients of all encrypted images along all  three directions are shown in Figure~\ref{fig:Corr}, where the respective ranges of $C_{xy}$ are [$-0.0078$, $0.0131$], [$-0.0092$,$0.0080$] and [$-0.0100$,$0.0513$]. These results show that the presented method is capable of reducing the pixel correlation near to zero.
\begin{figure}[H]
\captionsetup[subfigure]{justification=centering}
\centering
\begin{subfigure}[b]{0.24\textwidth}
	\centering
	 \includegraphics[scale=0.275]{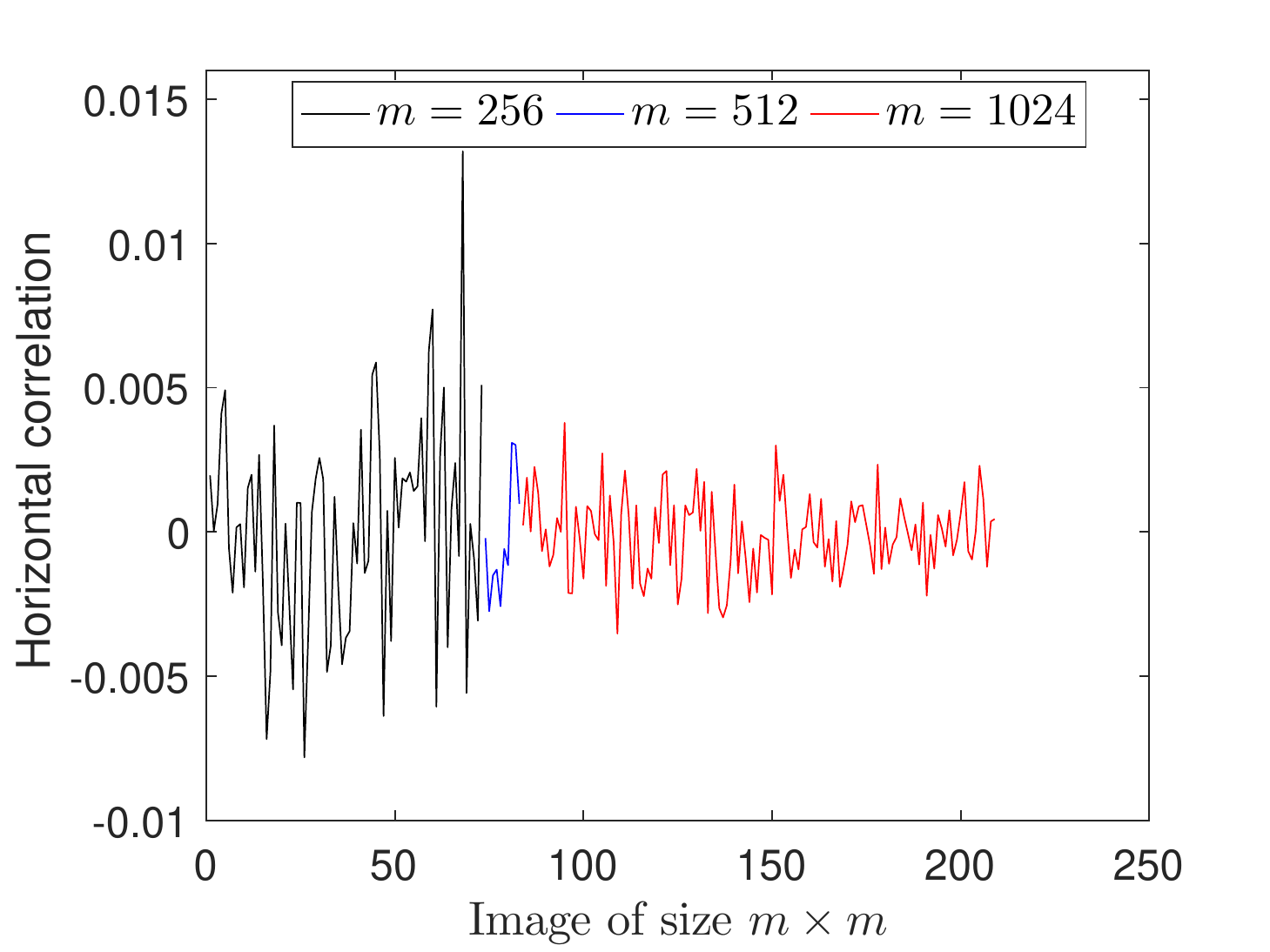}
	 \caption{ }
	 \end{subfigure}
\begin{subfigure}[b]{0.24\textwidth}
	\centering
	 \includegraphics[scale=0.275]{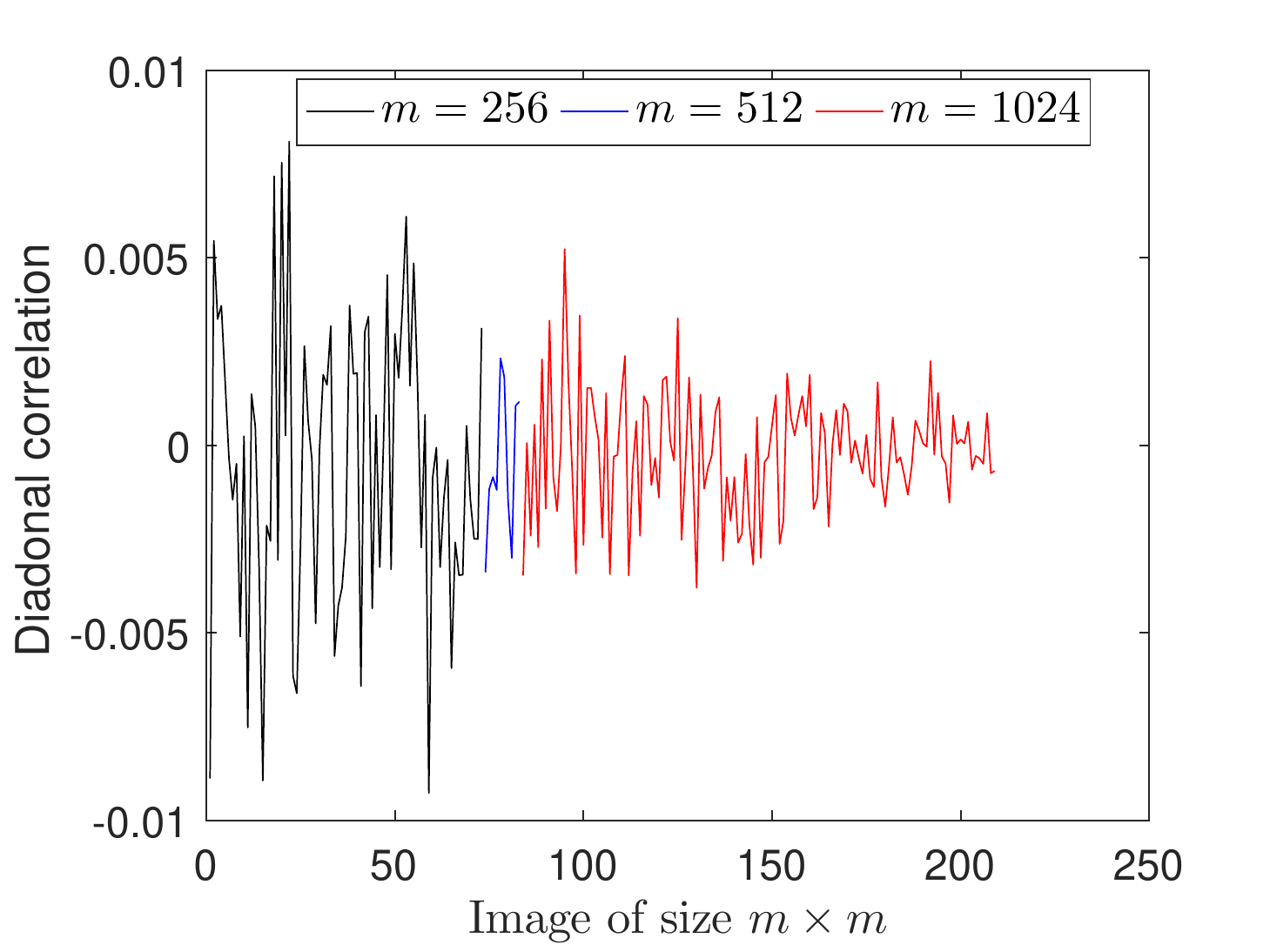}	
      \caption{ }
	 \end{subfigure}
	\begin{subfigure}[b]{0.24\textwidth}
	\centering
	 \includegraphics[scale=0.275]{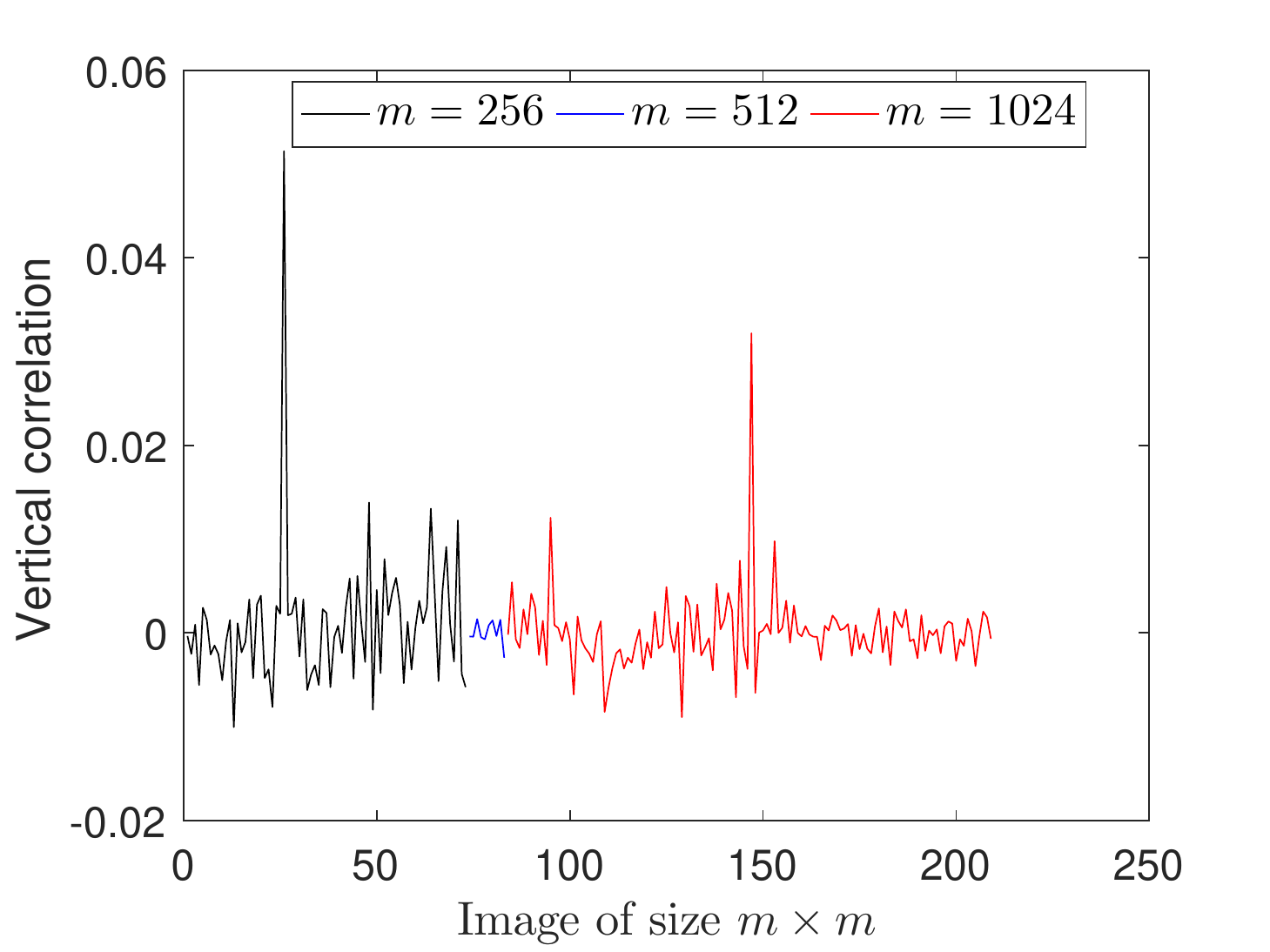}
	 \caption{}
	 \end{subfigure}
	 \begin{subfigure}[b]{0.24\textwidth}
	\centering
	 \includegraphics[scale=0.275]{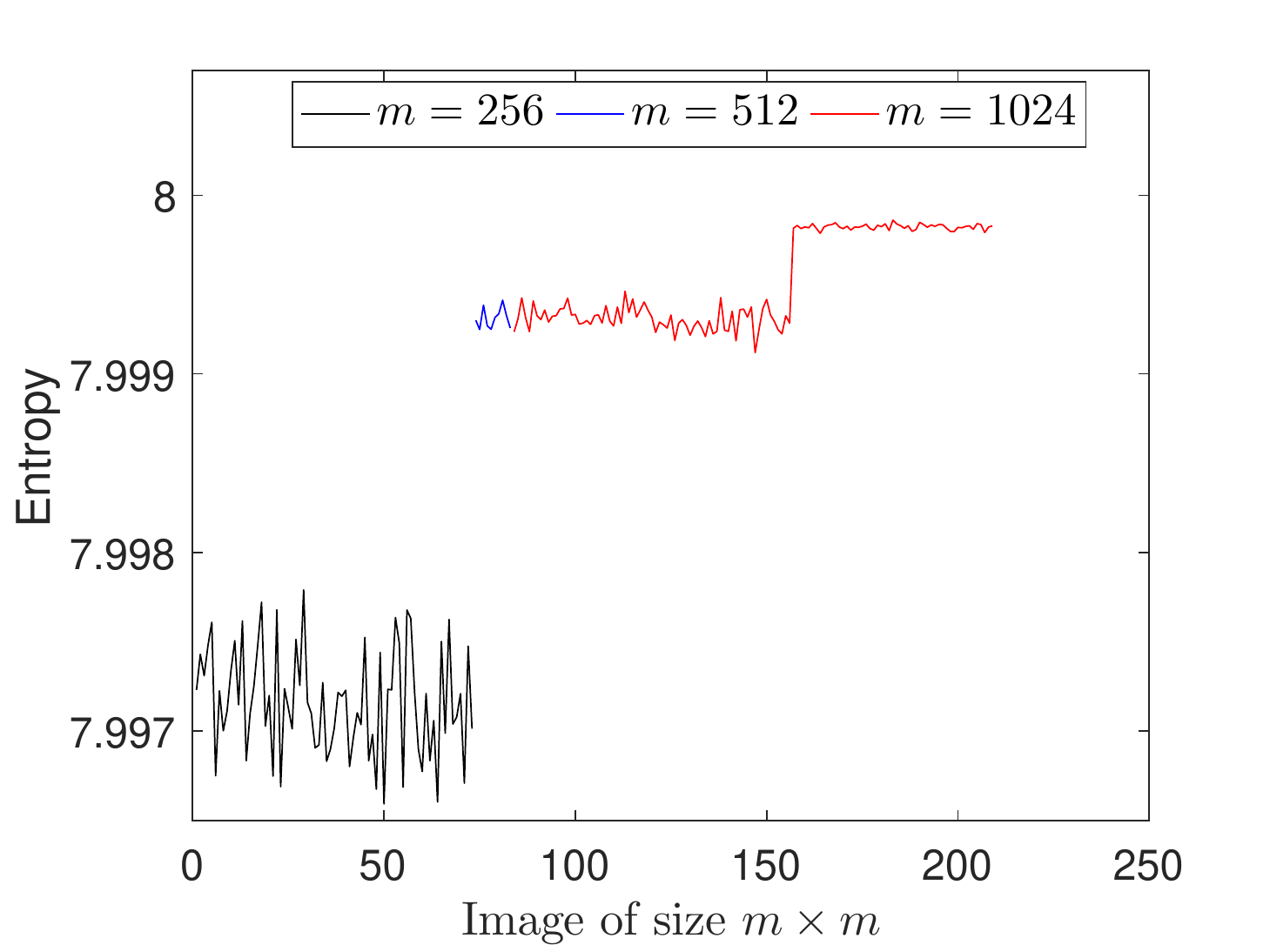}
	 \caption{ }
	 \end{subfigure}
\caption{(\textbf{a})–(\textbf{c})~The horizontal, diagonal and vertical correlations among pixels of each image in USC-SIPI database; (\textbf{d})~the entropy of each image in USC-SIPI database.}
\label{fig:Corr}
\end{figure}

In addition, $2000$ pairs of adjacent pixels of the plain image and cipher image of Lena$_{512\times 512}$ are randomly selected. Then correlation distributions of the adjacent pixels in all  three directions are shown in Figure~\ref{fig:scatter}, which reveals the strong pixel correlation in the plain image but a weak pixel correlation in the cipher image generated by the current scheme.
\begin{figure}[H]
\captionsetup[subfigure]{justification=centering}
\centering
\begin{subfigure}[b]{0.21\textwidth}
	\centering
	 \includegraphics[scale=0.21]{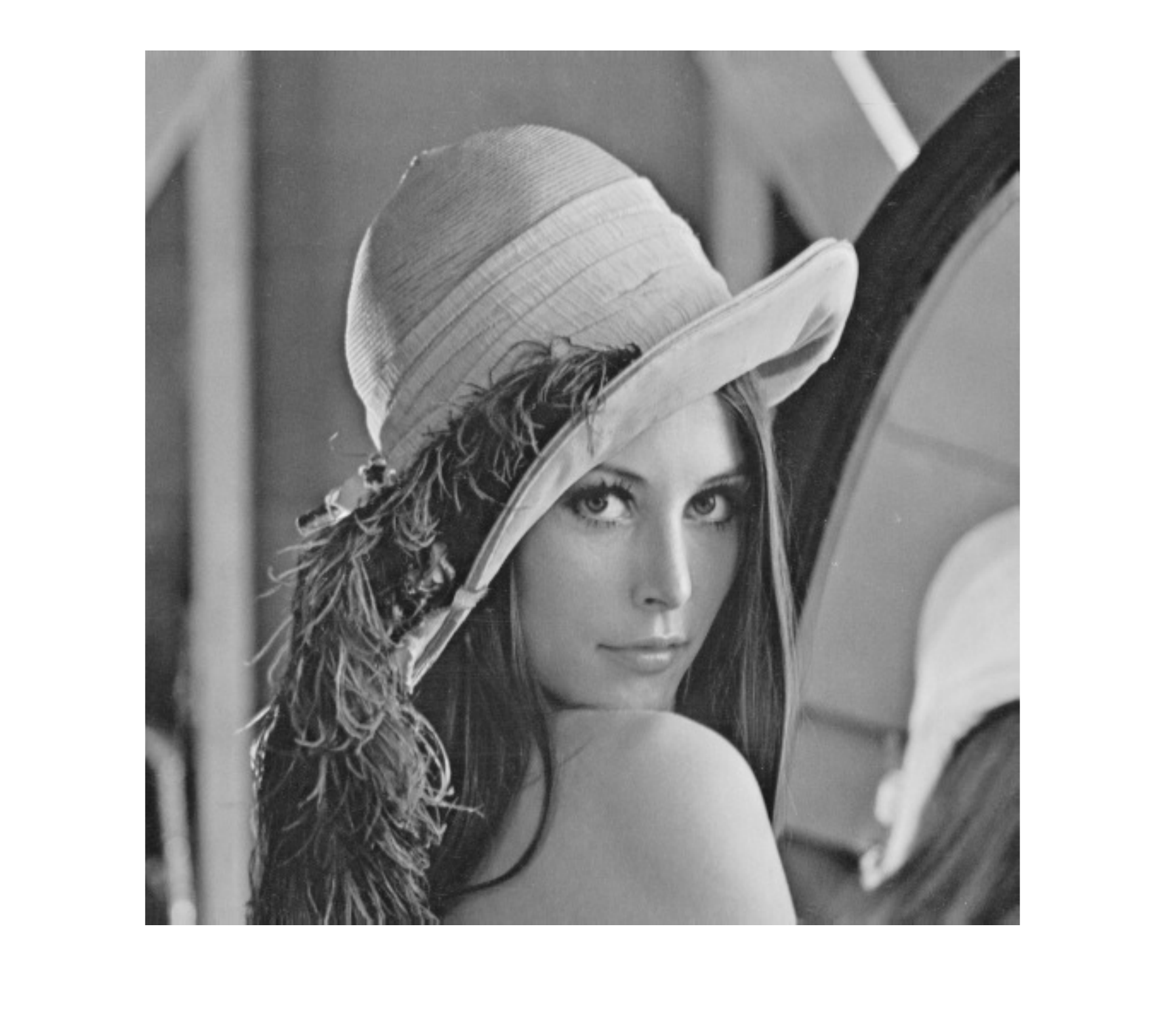}
	 \caption{ }
	 \end{subfigure}
\begin{subfigure}[b]{0.25\textwidth}
	\centering
	 \includegraphics[scale=0.29]{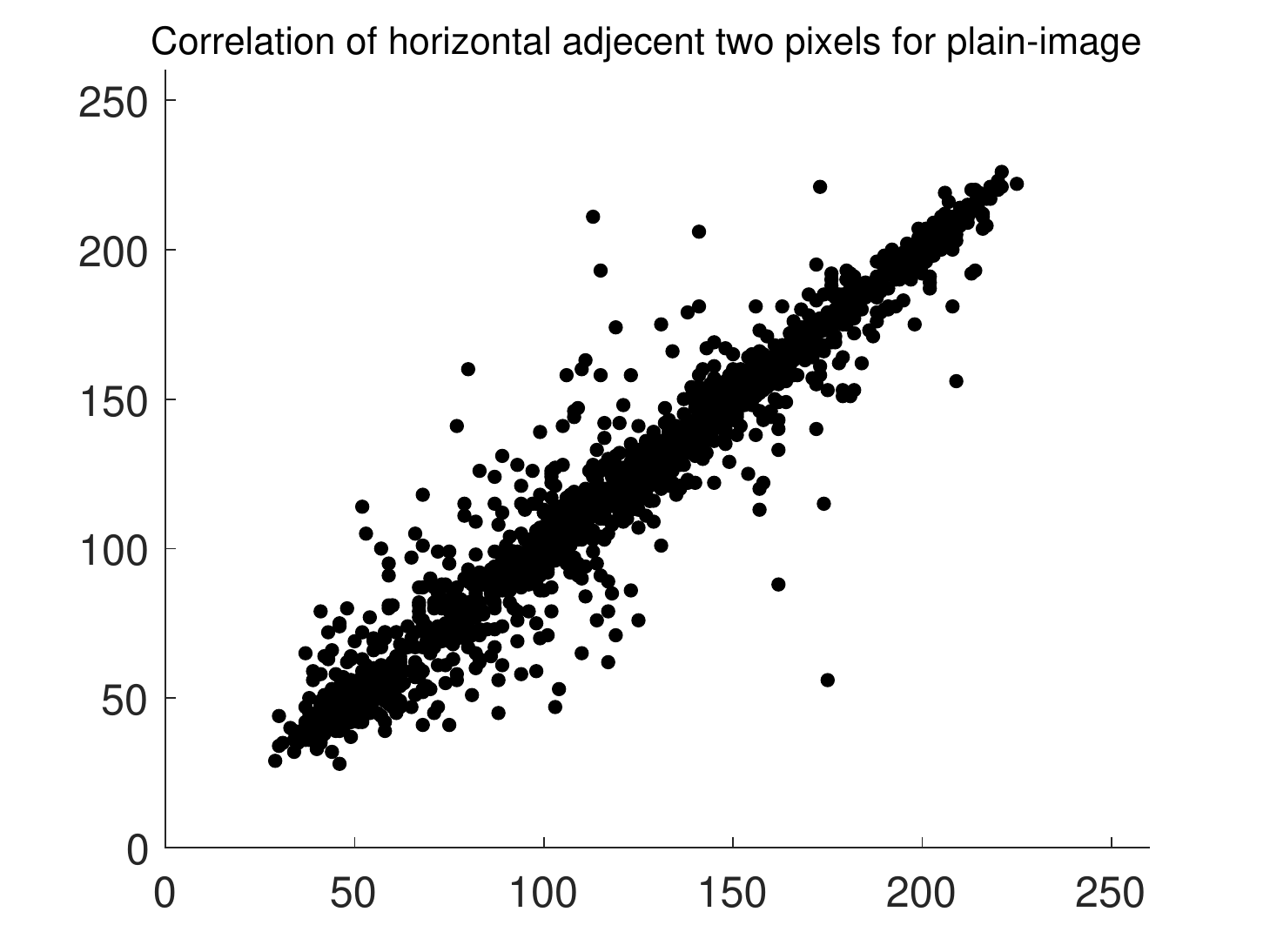}	
      \caption{ }
	 \end{subfigure}
	\begin{subfigure}[b]{0.25\textwidth}
	\centering
	 \includegraphics[scale=0.29]{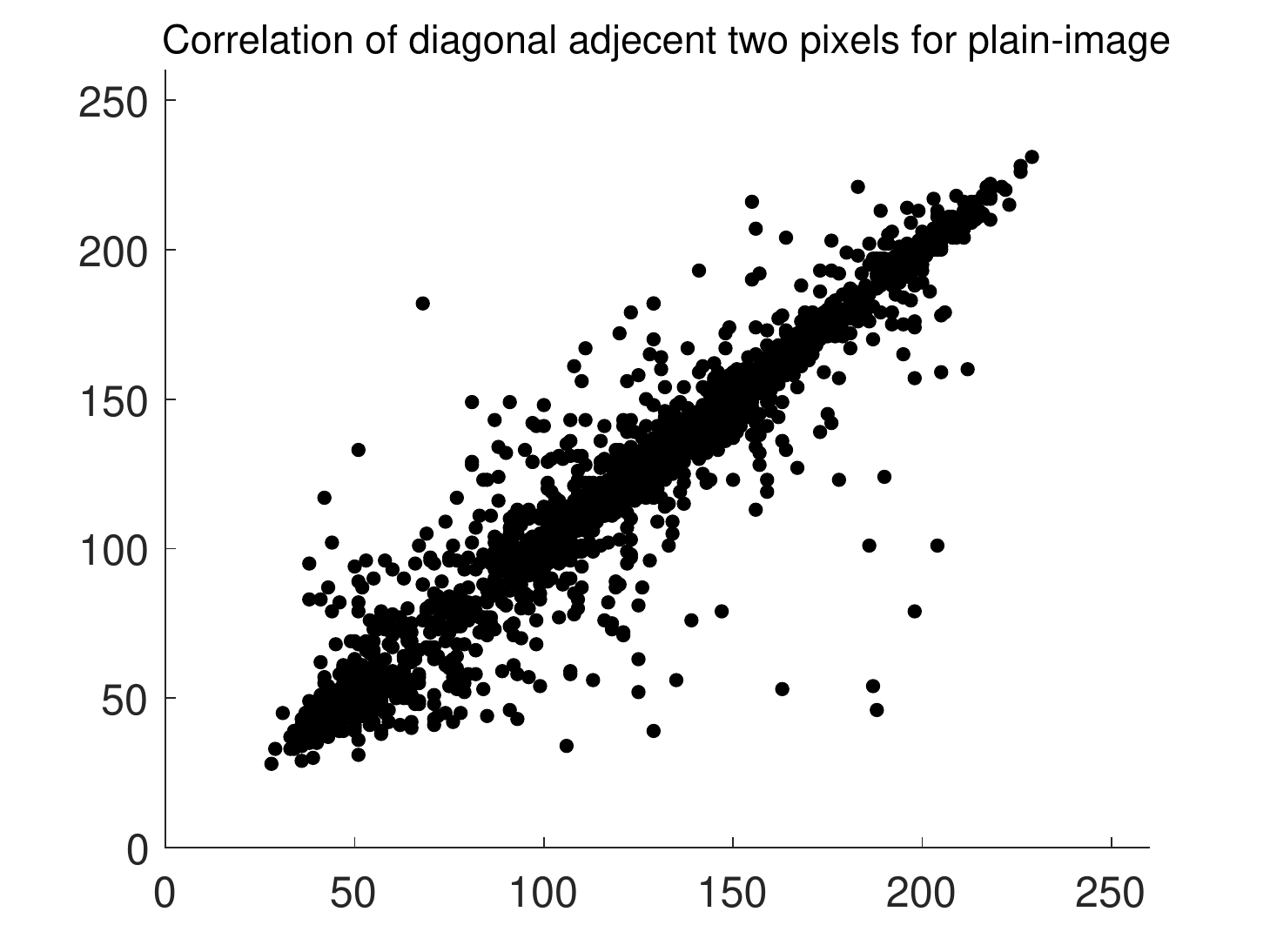}
	 \caption{}
	 \end{subfigure}
	 \begin{subfigure}[b]{0.25\textwidth}
	\centering
	 \includegraphics[scale=0.29]{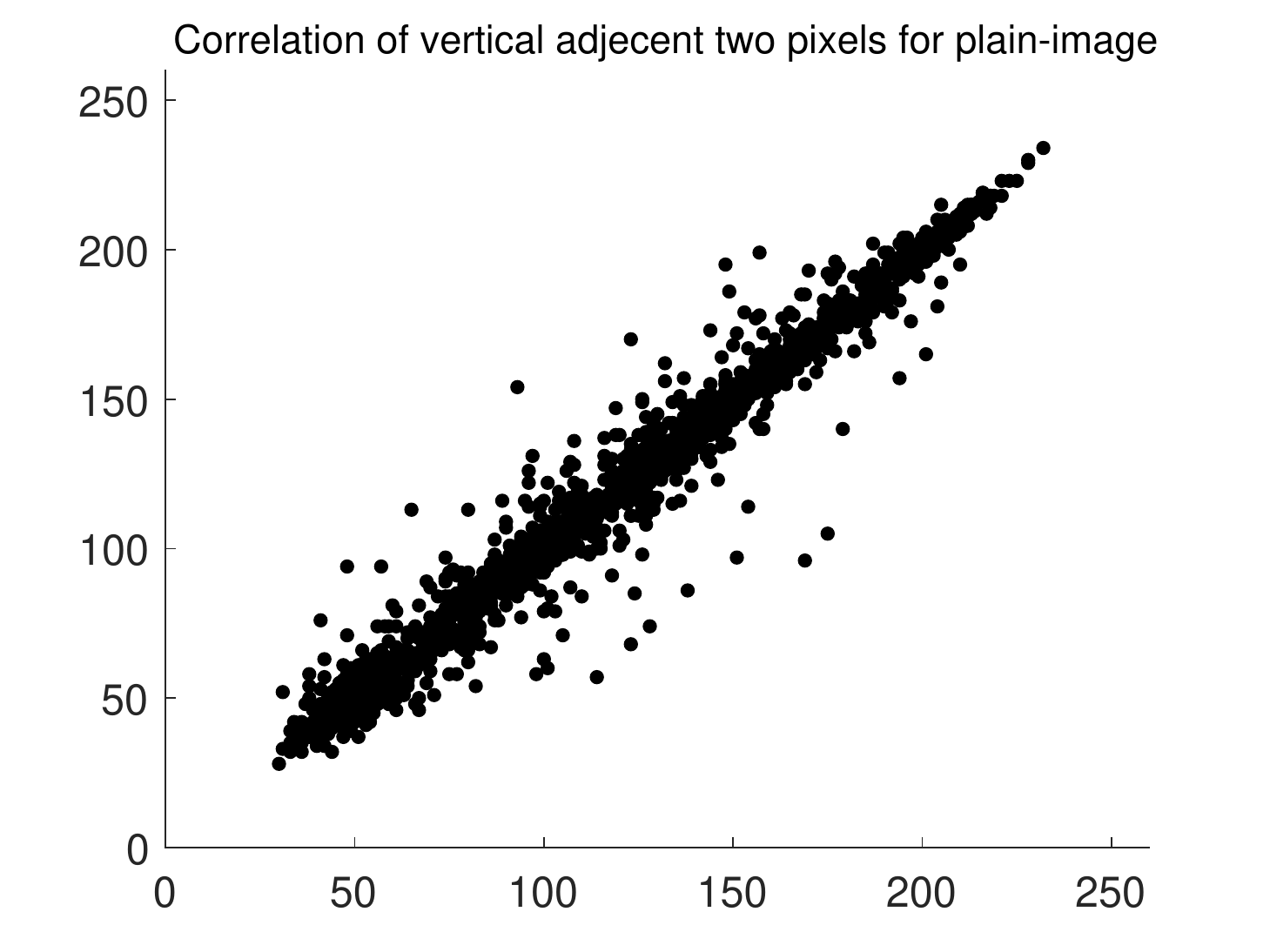}
	 \caption{ }
	 \end{subfigure}
\bigskip\\
\begin{subfigure}[b]{0.21\textwidth}
	\centering
	 \includegraphics[scale=0.21]{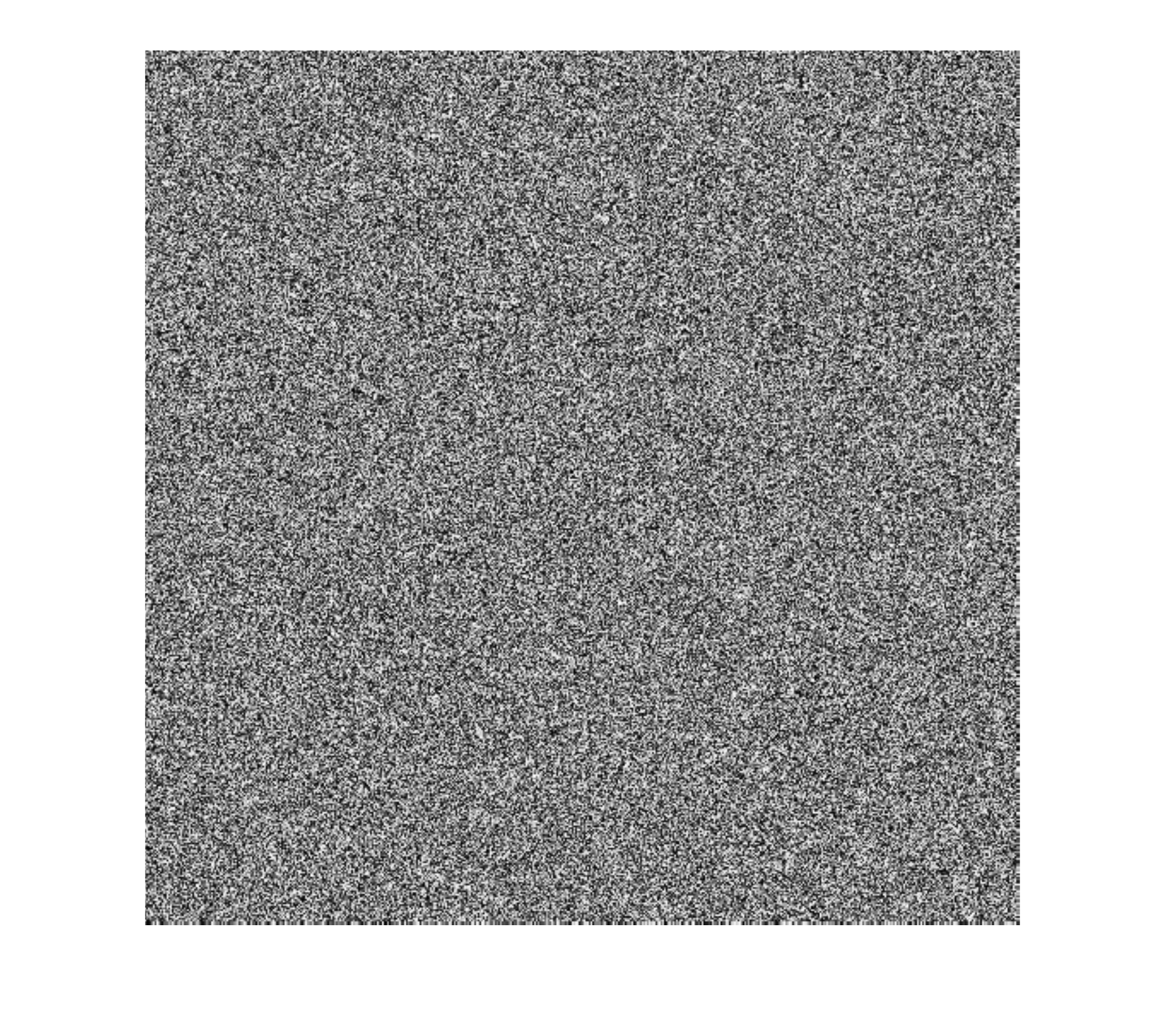}
	 \caption{}
	 \end{subfigure}
\begin{subfigure}[b]{0.25\textwidth}
	\centering
	 \includegraphics[scale=0.29]{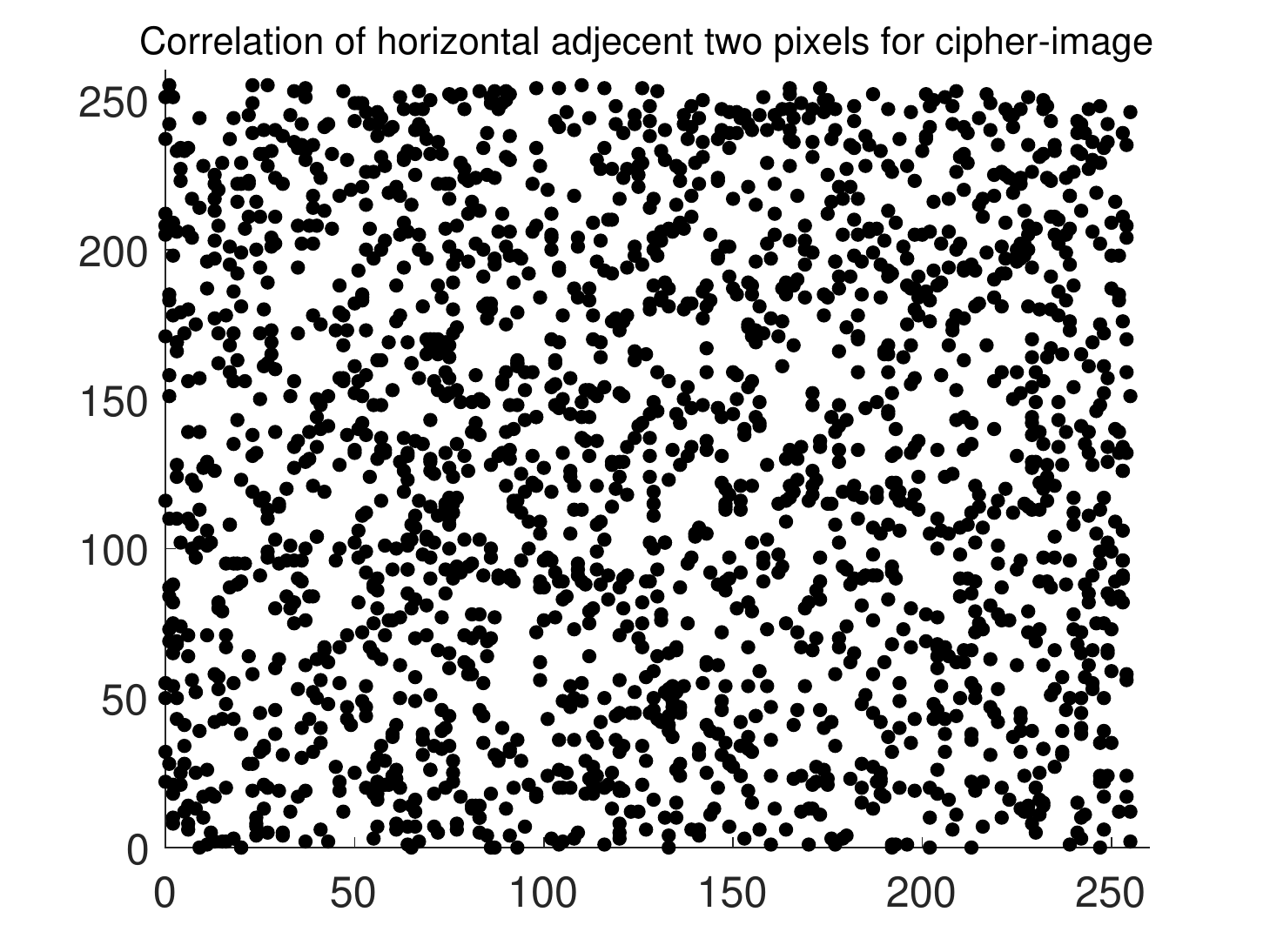}
	 \caption{ }
	 \end{subfigure}
	 \begin{subfigure}[b]{0.25\textwidth}
	\centering
	 \includegraphics[scale=0.29]{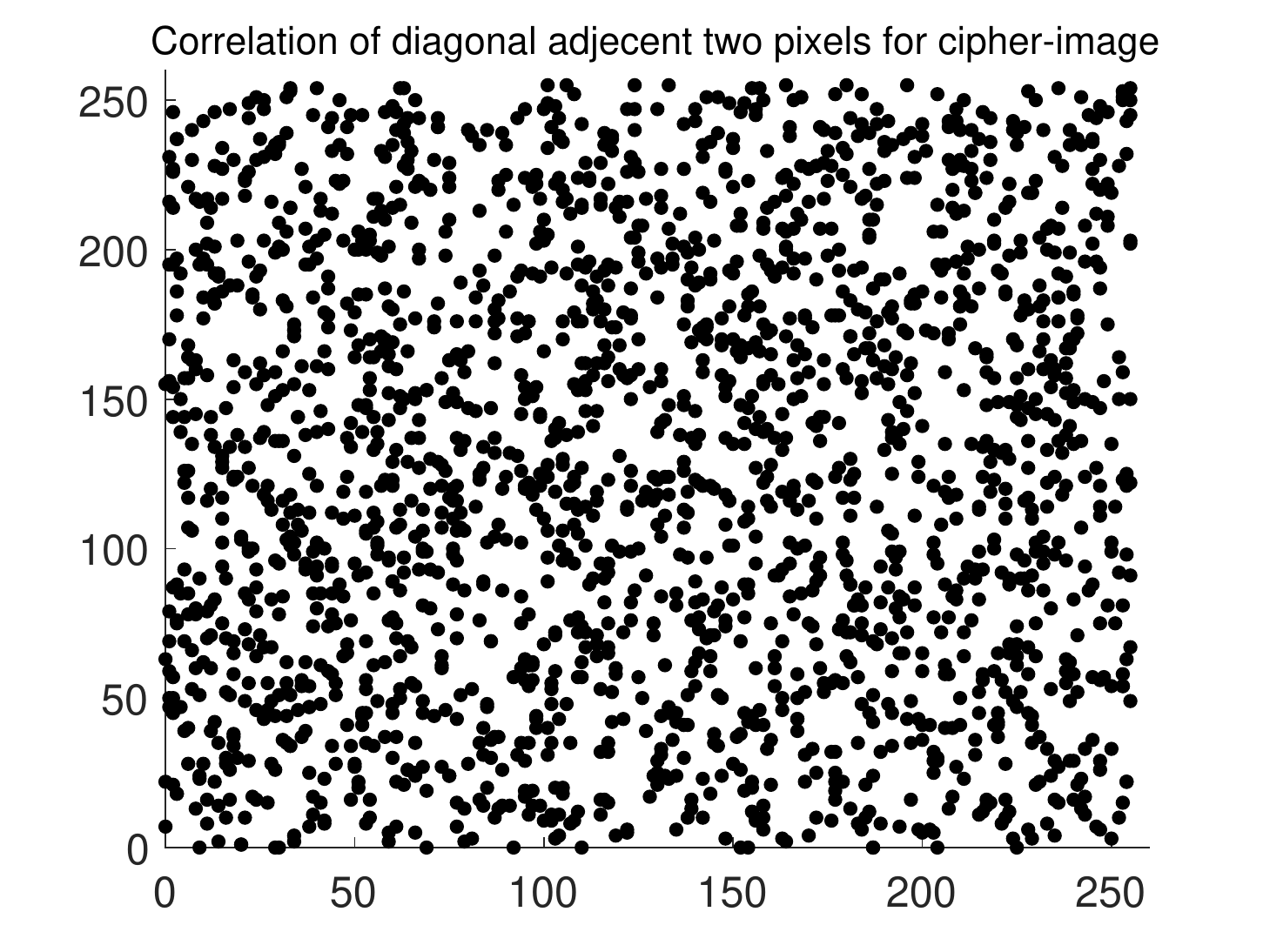}
	 \caption{}
	 \end{subfigure}
\begin{subfigure}[b]{0.25\textwidth}
	\centering
	 \includegraphics[scale=0.29]{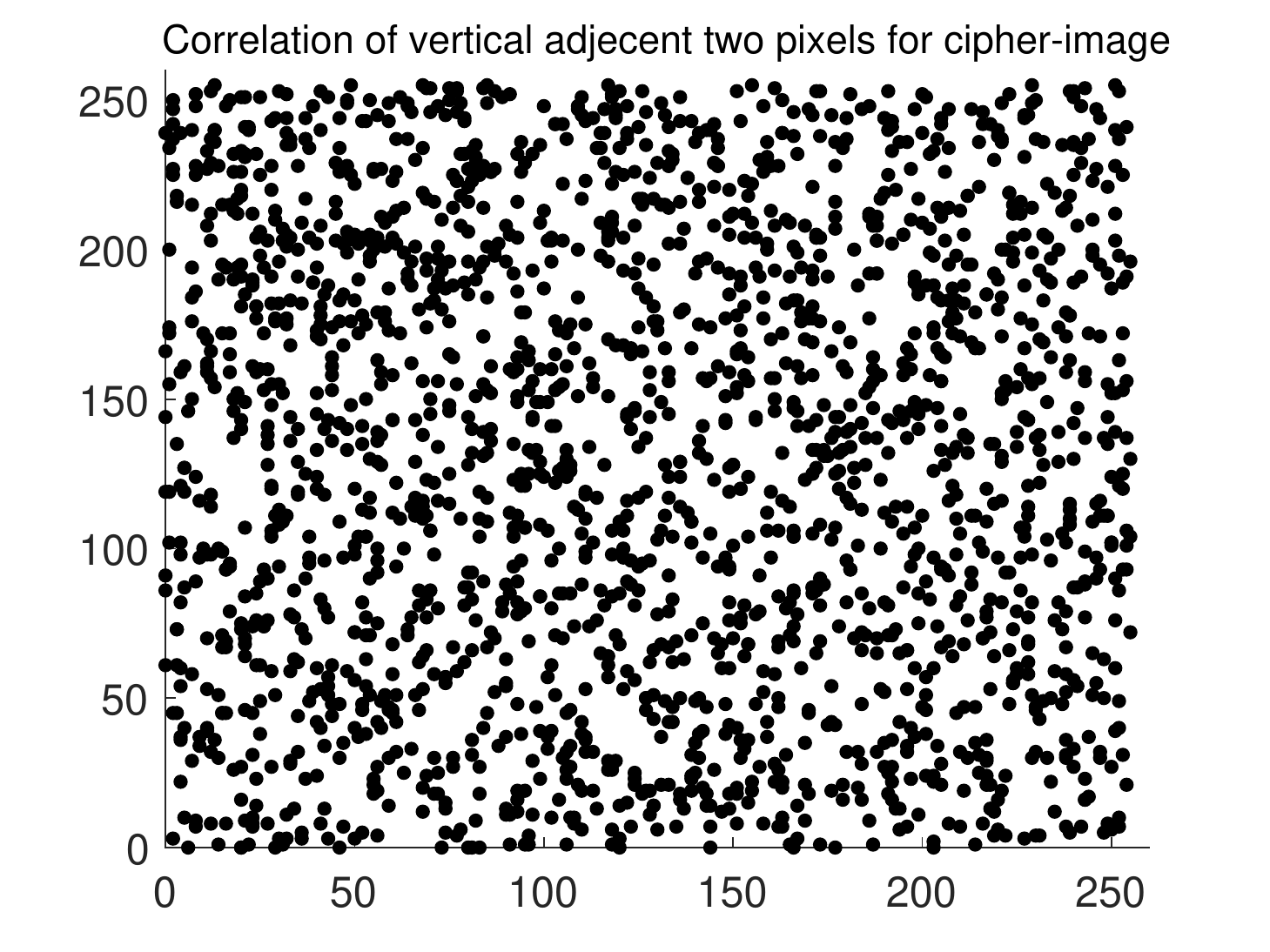}
	 \caption{ }
	 \end{subfigure}

\caption{{(\textbf{b})--(\textbf{d})~The distribution of pixels of the plane image in the horizontal, diagonal and vertical directions;
(\textbf{f})--(\textbf{h})~the distribution of pixels of the cipher image in the horizontal, diagonal and vertical~directions.}}
\label{fig:scatter}
\end{figure}
\end{enumerate}
\subsubsection{Differential Attack}
In differential attacks the opponents try to get the secret keys by studying the relation between the plain image and cipher image. Normally attackers encrypt two images by applying a small change to these images, then compare the properties of the corresponding cipher images. If a minor change in the original image can cause a significant change in the encrypted image, then the cryptosystem has a high security level. The two tests NPCR (number of pixels change rate) and UACI (unified average changing intensity) are usually used to describe the security level against differential attacks. For two plain images $P$ and $P^{'}$ different at only one pixel value, let  $C_{\rm P}$ and $C_{\rm P^{'}}$ be the cipher images of $P$ and $P^{'}$, respectively, then NPCR and UACI are calculated as:
\begin{align}
{\rm NPCR} &= \sum_{u=1}^{m} \sum_{v=1}^{n}\frac{\tau(u,v)} { m\times n},\label{NPCR}\\
{\rm UACI} &= \sum_{u=1}^{m} \sum_{v=1}^{n} \frac{|C_{\rm P}(u, v) - C_{\rm P^{'}}(u, v)|}{255\times m\times n}\label{UACI},
\end{align}
\textls[-15]{where $\tau(u,v)=0$ if $C_{\rm P}(u, v) = C_{\rm P^{'}}(u, v)$ and $\tau(u,v)=1,$ otherwise. The expected values of NPCR and UACI for 8-bit images are $0.996094$ and $0.334635$, respectively~\cite{Wu}. We applied the above two tests to each image of the database by randomly changing the pixel value of each image. The experimental results are shown in Figure~\ref{fig:NP_UA}, giving average values of NPCR and UACI of $0.9961$ and $0.3334$, respectively. It follows from the obtained results that our scheme is capable of resisting a differential attack.}
\begin{figure}[H]
\captionsetup[subfigure]{justification=centering}
\centering
\begin{subfigure}[b]{0.24\textwidth}
	\centering
	 \includegraphics[scale=0.275]{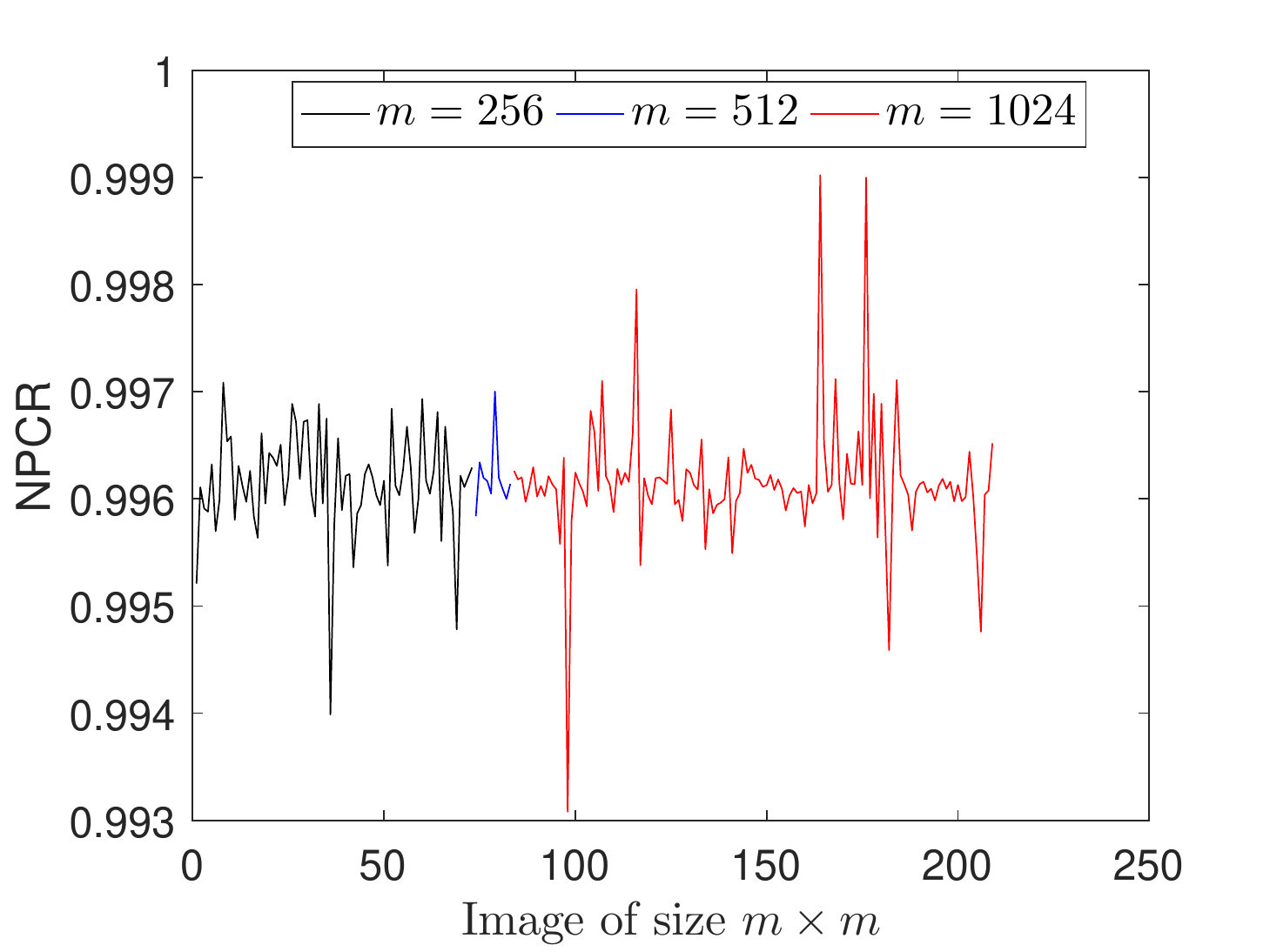}
	 \caption{ }
	 \end{subfigure}
\begin{subfigure}[b]{0.24\textwidth}
	\centering
	 \includegraphics[scale=0.275]{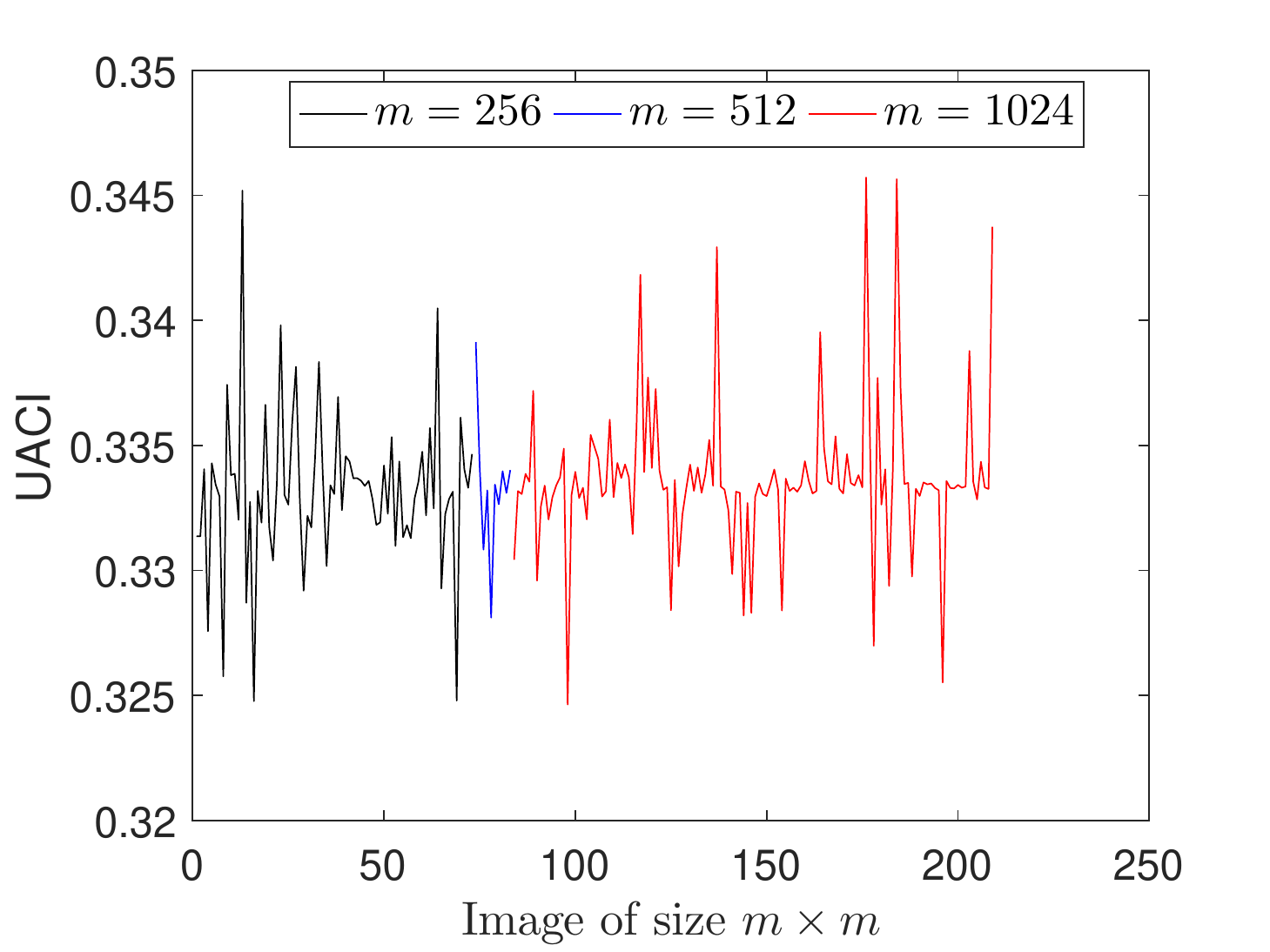}	
      \caption{ }
	 \end{subfigure}
	\centering
\begin{subfigure}[b]{0.24\textwidth}
	\centering
	 \includegraphics[scale=0.275]{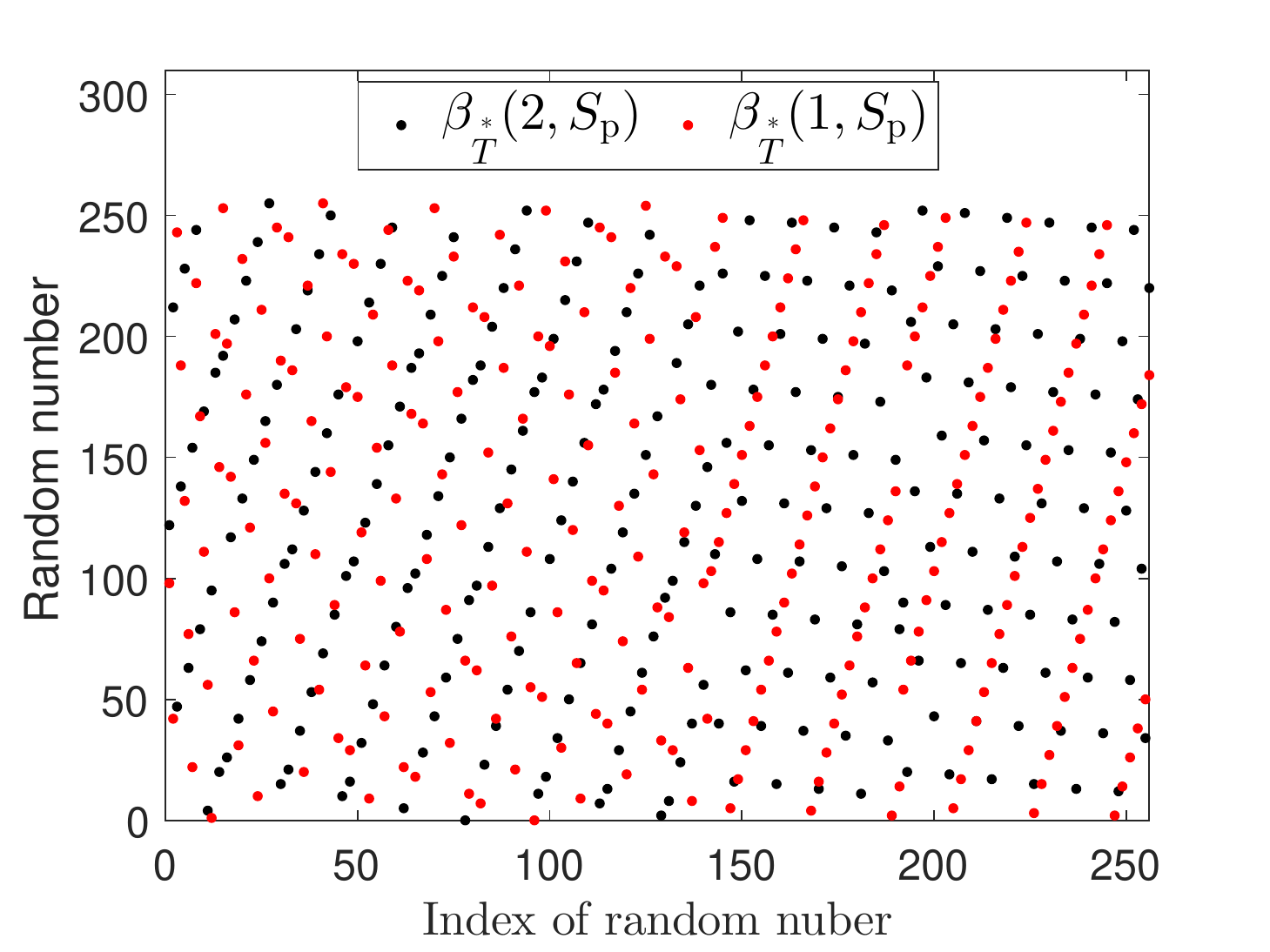}
	 \caption{ }
	 \end{subfigure}
\begin{subfigure}[b]{0.24\textwidth}
	\centering
	 \includegraphics[scale=0.275]{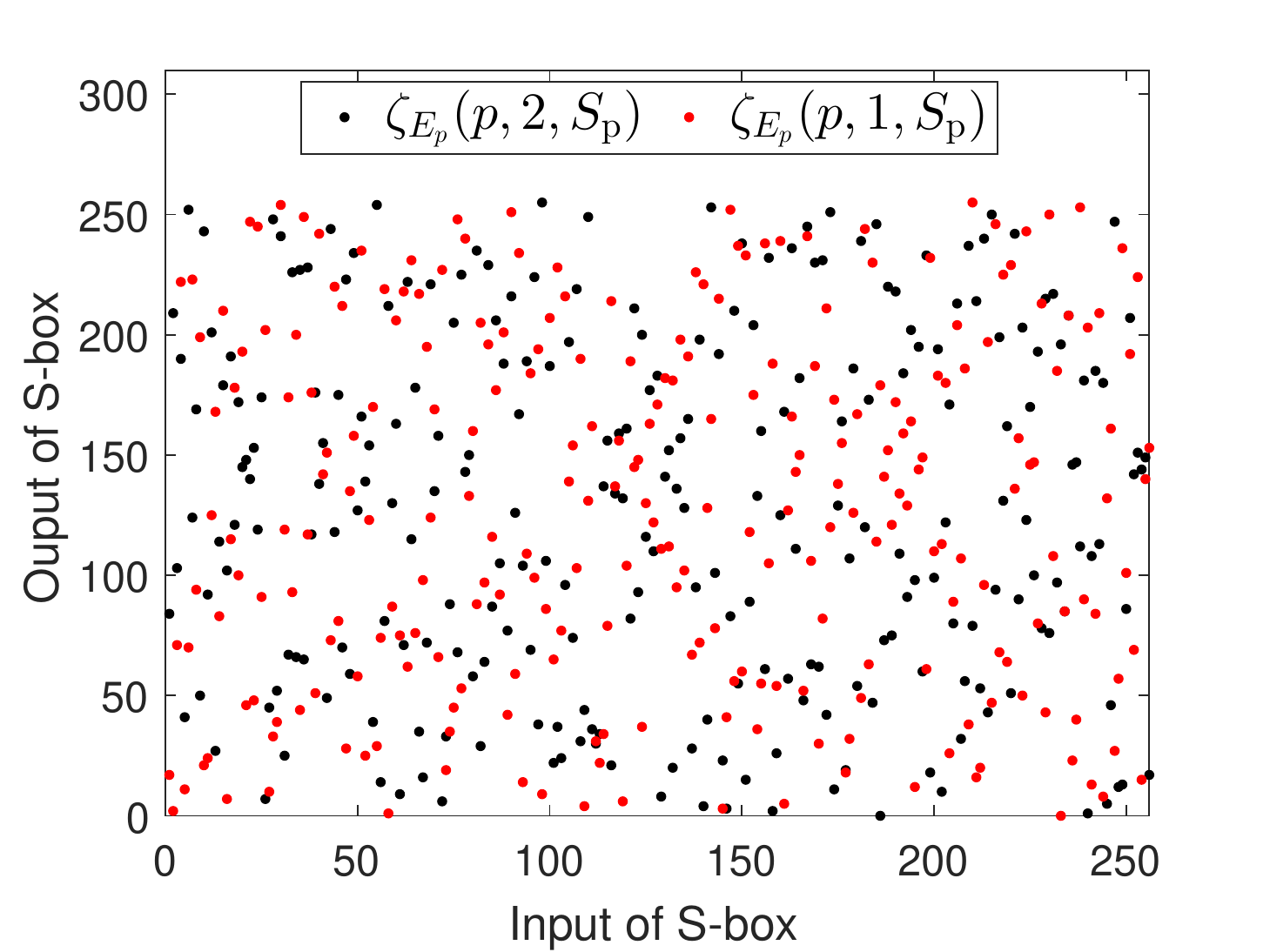}	
      \caption{ }
	 \end{subfigure}
\caption{(\textbf{a--b})~The NPCR and UACI results for each image in the USC-SIPI database; (\textbf{c})~First $256$ pseudo-random numbers and (\textbf{d})~two S-boxes  generated for Lena$_{512 \times 512}$ with a small change in an input key $t$.}
\label{fig:NP_UA}
\end{figure}
\subsubsection{Key Analysis}
\textls[-25]{For a secure cryptosystem it is essential to perform well against key attacks. A cryptosystem is highly secure against key attacks if it has key sensitivity and large key space and strongly opposes the known-plaintext/chosen-plaintext attack. The proposed scheme is analyzed against key attacks as follows.}
\begin{enumerate}
\item[(1)] Key sensitivity. Attackers usually use slightly different keys to encrypt a plain image and then compare the obtained cipher image with the original cipher image to get the actual keys. Thus, high key sensitivity is essential for higher security. That is, cipher images of a plain image generated by  two slightly different keys should be entirely different. The difference of the cipher images is quantified by Equations~(\ref{NPCR}) and~(\ref{UACI}). In experiments we encrypted the whole database by changing only one key, while other keys  remain unchanged. The key sensitivity results are shown in Table \ref{co}, where the average values of NPCR and UACI are $0.9960$ and $0.3341$, respectively, which specify the remarkable difference in the cipher images. Moreover, our cryptosytem is based on the pseudo-random numbers and S-boxes. The sensitivity of pseudo-random numbers sequences $\beta_{\stackrel{*}{T}}(2,S_{\rm P})$ and $\beta_{\stackrel{*}{T}}(1,S_{\rm P})$ and S-boxes $\zeta_{E_{p}}(p,2,S_{\rm P})$ and $\zeta_{E_{p}}(p,1,S_{\rm P})$ for Lena$_{512\times 512}$ is shown in Figure~\ref{fig:NP_UA}.
\begin{table}[H]
{\caption{Difference between two encrypted images when key $t=2$ is changed to $t=1$. NPCR: number of pixels change rate; UACI: unified average changing intensity.}
\label{co}}
\centering
\scalebox{0.94}
{\

\bgroup
\def\arraystretch{0.99}
{\begin{tabular}{ccccccccc}
\toprule
\textbf{Image}   & \textbf{NPCR(\%)} &\textbf{UACI(\%)}  & \textbf{Image} &\textbf{NPCR(\%)}  &\textbf{UACI(\%)} & \textbf{Image} &\textbf{NPCR(\%)}  &\textbf{UACI(\%)} \\ \midrule
Female& 99.62 &33.39  &House  &99.62&	33.23 &Couple  &99.56 &33.30  \\
Tree&99.59  &33.35  & Beans &99.64 &33.23 & Splash  &99.60 &33.97  \\ \bottomrule
\end{tabular}
\egroup
}}
\end{table}

\item[(2)] Key space. In order to resist a brute force attack, key space should be sufficiently large. For~any cryptosystem, key space represents the set of all possible keys required for the encryption process. Generally, the size of the key space should be greater than $2^{128}.$ In the present scheme the parameters $a_{1},b_{1},a_{2},b_{2},a_{3},\delta,L, S_{\rm P},t$ and $p$ are used as secret keys, and we store each of them in $28$ bits. Thus the key space of the proposed cryptosystem is $2^{280}$ which is larger than $2^{128}$ and hence capable to resist a brute force attack.
\item[(3)] Known-plaintext/chosen-plaintext attack. In a known-plaintext attack, the attacker has partial knowledge about the plain image and cipher image, and tries to break the cryptosystem, while in a chosen-plaintext attack the attacker encrypts an arbitrary image to get the encryption keys. An~all-white/black image is usually encrypted to test the performance of a scheme against these powerful attacks~\cite{CP,[ra]}. We analyzed our scheme by encrypting an all-white/black image of size $256 \times 256$. The results are shown in Figure~\ref{fig:whiteblack} and Table \ref{cmp}, revealing that the encrypted images are significantly randomized. Thus the proposed system is capable of preventing the above mentioned attacks.
    \begin{figure}[H]
\captionsetup[subfigure]{justification=centering}
\centering
\begin{subfigure}[b]{0.14\textwidth}
	\centering
	 \includegraphics[scale=0.152,frame]{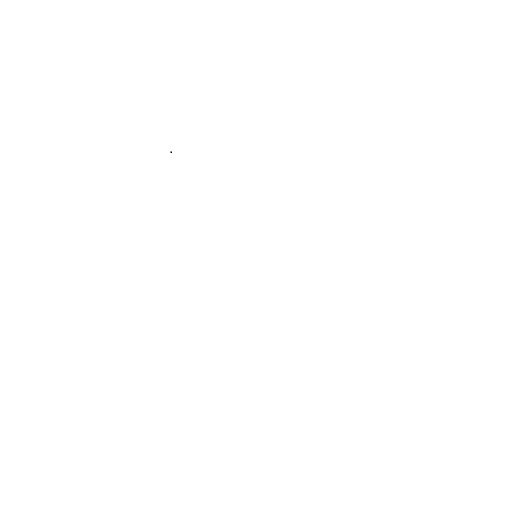}
	 \caption{ }
	 \end{subfigure}
\begin{subfigure}[b]{0.14\textwidth}
	\centering
	 \includegraphics[scale=0.155]{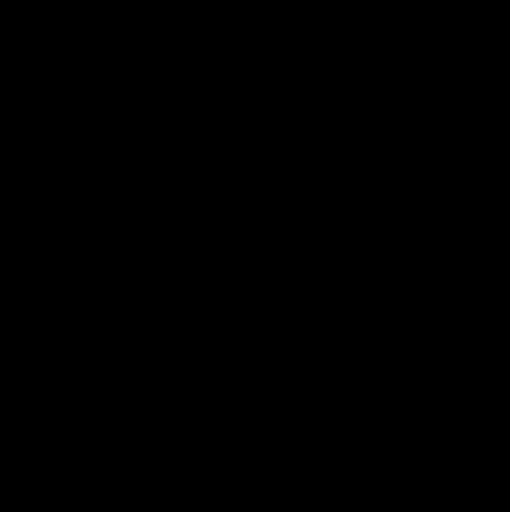}
	 \caption{}
	 \end{subfigure}
\begin{subfigure}[b]{0.14\textwidth}
	\centering
	 \includegraphics[scale=0.155]{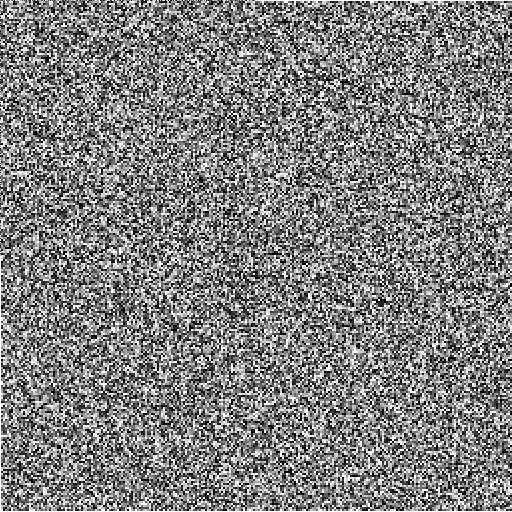}	
      \caption{ }
	 \end{subfigure}
\begin{subfigure}[b]{0.14\textwidth}
	\centering
	 \includegraphics[scale=0.155]{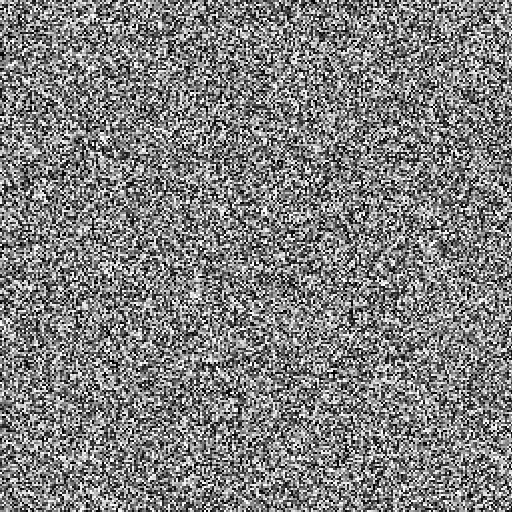}
	 \caption{ }
	 \end{subfigure}
	\begin{subfigure}[b]{0.17\textwidth}
	\centering
	 \includegraphics[scale=0.205]{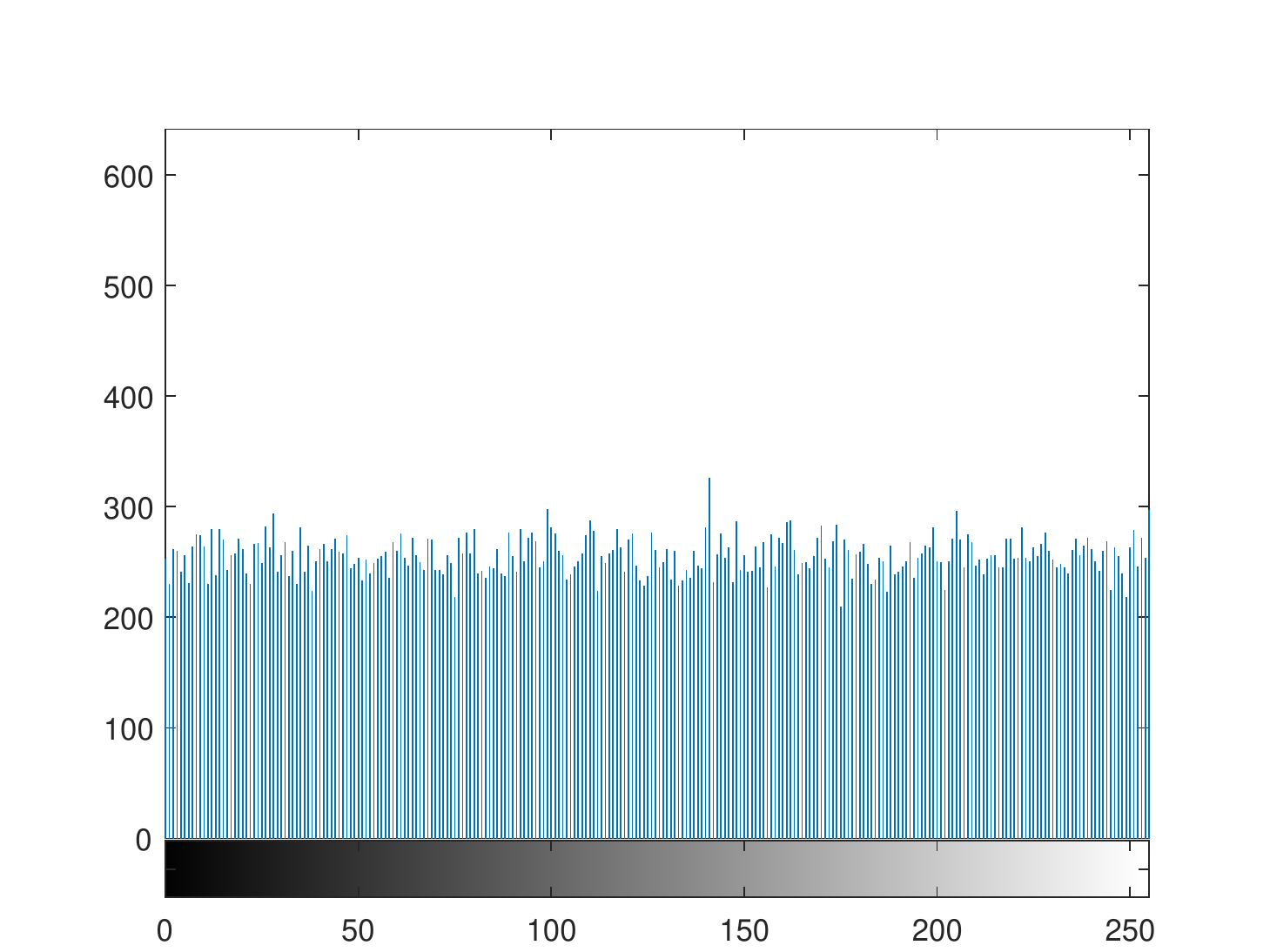}
	 \caption{}
	 \end{subfigure}
	 \begin{subfigure}[b]{0.20\textwidth}
	\centering
	 \includegraphics[scale=0.205]{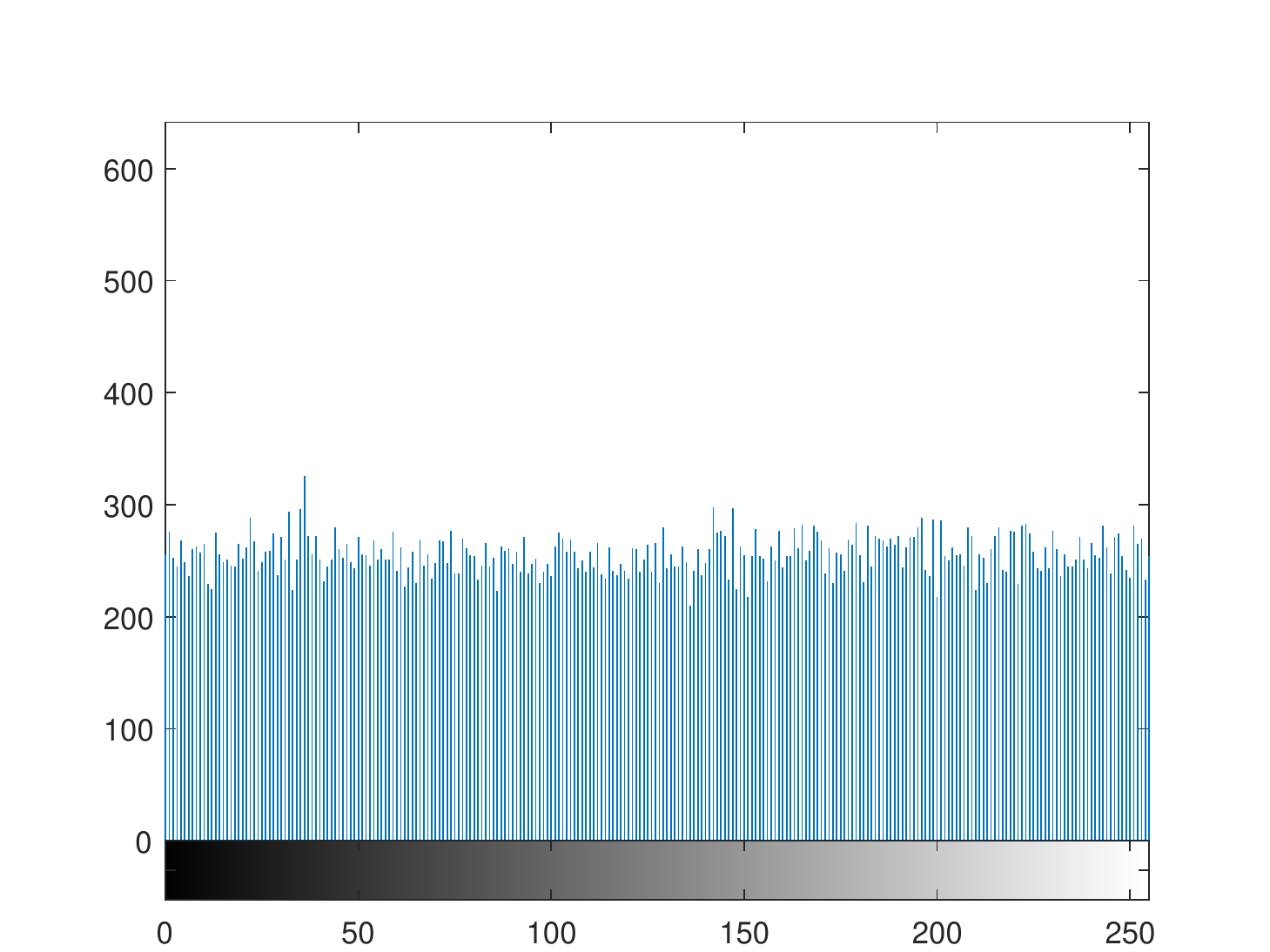}
	 \caption{}
	 \end{subfigure}
\caption{(\textbf{a}) All-white; (\textbf{b}) all-black; (\textbf{c})--(\textbf{d}) cipher images of (\textbf{a})--(\textbf{b}); (\textbf{e})--(\textbf{f}) histograms of (\textbf{c})--(\textbf{d}).} 
\label{fig:whiteblack}
\end{figure}\vspace{-12pt}

\begin{table}[H]
\parbox{13.5cm}
{\caption{Security analysis of all-white/black encrypted
images by the proposed encryption technique.}
\label{cmp}}
\centering
%
\begin{tabular}{ccccccc}
\toprule
\multirow{2}{*}{\textbf{Plain Image}}&\multirow{2}{*}{ \textbf{Entropy}}& \multicolumn{3}{c}{\textbf{Correlation of Plain Image}}&\multirow{2}{*}{\textbf{NPCR~(\%)}}&\multirow{2}{*}{\textbf{UACI~(\%)}} \\
\cmidrule{3-5} &  &\textbf{Hori.}& \textbf{Diag.}&\textbf{Ver.}&  &\\
\midrule
 All-white&7.9969&0.0027&0.0020&$-$0.0090&99.60	&33.45\\
 All-black&7.9969&$-$0.0080&0.0035&0.0057&99.62&33.41\\
\bottomrule
\end{tabular}
\end{table}

\end{enumerate}
\subsection{Comparison and Discussion}\label{Comparison}

%

Apart from security analyses, the proposed scheme is compared with some well-known image encryption techniques. The gray scale images of Lena$_{256\times256}$ and Lena$_{512\times512}$ are encrypted using the presented method, and experimental results are listed in Table~\ref{comparison}.
\begin{table}[H]
\caption{Comparison of the proposed encryption scheme with several existing cryptosystems for image Lena$_{m \times m}$, $m$ = 256,512.}
\centering
\resizebox{\columnwidth}{!}
{
\bgroup
\def\arraystretch{0.9}
{\begin{tabular}{l l l l l l l l c l}
\toprule
\multirow{2}{*}{\textbf{Size \boldmath{$m$}}} &\multirow{2}{*}{\textbf{Algorithm}} & \multirow{2}{*}{\textbf{Entropy}}  & \multicolumn{3}{c}{ \textbf{Correlation}} & \multirow{2}{*}{\textbf{NPCR (\%)}} & \multirow{2}{*}{\textbf{UACI(\%)}}& \boldmath{$\#$} & \textbf{Dynamic} \\ \cmidrule{4-6}
  & & &  \textbf{Hori.} & \textbf{Diag.} & \textbf{Ver.} &&  & \textbf{S-Boxes} & \textbf{S-Boxes} \\ \midrule
\multirow{7}{*}{$256$}&{\bf Ours} & \multicolumn{1}{c}{7.9974}  &
\multicolumn{1}{c}{0.0001
} & \multicolumn{1}{c}{$-$0.0007} & \multicolumn{1}{c}{
$-$0.0001} & \multicolumn{1}{c}{99.91} & \multicolumn{1}{c}{33.27}& \multicolumn{1}{c}{1} &
\multicolumn{1}{c}{Yes} \\
&Ref.~\cite{signal} & \multicolumn{1}{c}{7.9993}  &
\multicolumn{1}{c}{0.0012} & \multicolumn{1}{c}{0.0003} &
\multicolumn{1}{c}{0.0010} & \multicolumn{1}{c}{99.60} & \multicolumn{1}{c}{33.50}& \multicolumn{1}{c}{
1} & \multicolumn{1}{c}{Yes} \\
&Ref.~\cite{signal52} & \multicolumn{1}{c}{7.9973}  &
\multicolumn{1}{c}{-} & \multicolumn{1}{c}{-} & \multicolumn{1}{c}{
-} & \multicolumn{1}{c}{99.50} & \multicolumn{1}{c}{33.30}& \multicolumn{1}{c}{0} &
\multicolumn{1}{c}{-} \\
&{Ref.~\cite{signal46}} & \multicolumn{1}{c}{7.9046}  &
\multicolumn{1}{c}{0.0164} & \multicolumn{1}{c}{$-$0.0098} & \multicolumn{1}{c}{
0.0324} & \multicolumn{1}{c}{98.92} & \multicolumn{1}{c}{32.79}& \multicolumn{1}{c}{{\textgreater}1{\textless}50} & \multicolumn{1}{c}{Yes} \\
&Ref.~\cite{[ra23]} & \multicolumn{1}{c}{7.9963}  &
\multicolumn{1}{c}{$-$0.0048} & \multicolumn{1}{c}{$-$0.0045} & \multicolumn{1}{c}{
$-$0.0112} & \multicolumn{1}{c}{99.62} & \multicolumn{1}{c}{33.70}& \multicolumn{1}{c}{8} &
\multicolumn{1}{c}{Yes} \\
&Ref.~\cite{[rb]} & \multicolumn{1}{c}{7.9912}  &\multicolumn{1}{c}{$-$0.0001} & \multicolumn{1}{c}{0.0091} & \multicolumn{1}{c}{
0.0089} & \multicolumn{1}{c}{100} & \multicolumn{1}{c}{33.47}& \multicolumn{1}{c}{0} &
\multicolumn{1}{c}{-} \\
&Ref.~\cite{Wan} &\multicolumn{1}{c}{7.9974}&\multicolumn{1}{c}{0.0020}&\multicolumn{1}{c}{0.0020}&
\multicolumn{1}{c}{0.0105}&\multicolumn{1}{c}{99.59}&\multicolumn{1}{c}{33.52}&
\multicolumn{1}{c}{0}&\multicolumn{1}{c}{-}\\ \midrule
\multirow{3}{*}{$512$}&{\bf Ours} & \multicolumn{1}{c}{7.9993} &
\multicolumn{1}{c}{0.0001} & \multicolumn{1}{c}{0.0042} & \multicolumn{1}{c}{
0.0021} & \multicolumn{1}{c}{99.61} & \multicolumn{1}{c}{33.36} & \multicolumn{1}{c}{1} &
\multicolumn{1}{c}{Yes} \\
&Ref.~\cite{Cheng} & \multicolumn{1}{c}{7.9992}  &
\multicolumn{1}{c}{0.0075} & \multicolumn{1}{c}{0.0016} & \multicolumn{1}{c}{0.0057} &
\multicolumn{1}{c}{99.61}& \multicolumn{1}{c}{33.38} & \multicolumn{1}{c}{1} & \multicolumn{1}{c}{No}
\\
&Ref.~\cite{[ra]} & \multicolumn{1}{c}{7.9993}  &
\multicolumn{1}{c}{$-$0.0004} & \multicolumn{1}{c}{0.0001} &
\multicolumn{1}{c}{$-$0.0018} & \multicolumn{1}{c}{99.60} & \multicolumn{1}{c}{33.48}& \multicolumn{1}{c}{
1} & \multicolumn{1}{c}{No} \\
\midrule
\multirow{3}{*}{-}&Ref. \cite{signal43} & \multicolumn{1}{c}{7.9970} &
\multicolumn{1}{c}{$-$0.0029} & \multicolumn{1}{c}{0.0135} & \multicolumn{1}{c}{
0.0126} & \multicolumn{1}{c}{99.60}& \multicolumn{1}{c}{33.48}  & \multicolumn{1}{c}{0} &
\multicolumn{1}{c}{-} \\ 
&Ref.~\cite{signal51} & \multicolumn{1}{c}{7.9994}  &
\multicolumn{1}{c}{0.0018} & \multicolumn{1}{c}{$-$0.0012} & \multicolumn{1}{c}{
0.0011} & \multicolumn{1}{c}{99.62} & \multicolumn{1}{c}{33.44}& \multicolumn{1}{c}{{\textgreater}1} &
\multicolumn{1}{c}{Yes} \\
&Ref.~\cite{signal45} & \multicolumn{1}{c}{7.9993}  &
\multicolumn{1}{c}{0.0032} & \multicolumn{1}{c}{0.0011} &
\multicolumn{1}{c}{$-$0.0002} & \multicolumn{1}{c}{99.60}& \multicolumn{1}{c}{33.47} & \multicolumn{1}{c}{{\textgreater}1} & \multicolumn{1}{c}{Yes} \\
 \bottomrule
\end{tabular}}
\egroup}
\label{comparison}%
\end{table}%
It is deduced that our scheme generates cipher images with comparable security. Furthermore, we remark that the scheme in~\cite{[ra]} generates pseudo-random numbers using group law on EC, while the proposed method generates pseudo-random numbers by constructing triads using auxiliary parameters of elliptic surfaces. Group law consists of many operations, which makes the pseudo-random number generation process slower than the one we present here.
The scheme in~\cite{[ra23]} decomposes an image to eight blocks and uses dynamic S-boxes for encryption purposes. The computation of multiple S-boxes takes more time than computing only one S-box. Similarly the techniques in~\cite{signal45,signal46} use a set of S-boxes and encrypt an image in blocks, while our newly developed scheme encrypts the whole image using only one dynamic S-box. Thus, our scheme is faster than the schemes in~\cite{signal45,signal46}. The security system in~\cite{signal43} uses a chaotic system to encrypt blocks of an image. The~results in Table~\ref{comparison} reveal that our proposed system is cryptographically stronger than the scheme in~\cite{signal43}. The algorithms in~\cite{[rb],signal52} combine  chaotic systems and different ECs to encrypt images. It~follows from Table~\ref{comparison} that the security level of our scheme is comparable to that of the schemes in~\cite{[rb],signal52}. The technique in~\cite{Wan} uses double chaos along with DNA coding to get good results, as shown in Table~\ref{comparison}, but the results obtained by the new scheme are better than that of~\cite{Wan}.  
Similarly the technique in~\cite{signal} encrypts images using ECs but  does not guarantee an S-box for each set of input parameters, thus making our scheme faster and more robust than the scheme developed in~\cite{signal}.

Furthermore, the following facts put our scheme in a favorable position:
\begin{itemize}[align=parleft,leftmargin=*,labelsep=7.5mm]
\item[(i)] Our scheme uses a dynamic S-box for each input image while the S-box used in ~\cite{[ra]} is a static one, which is vulnerable~\cite{Rosenthal} and less secure than a dynamic one~\cite{Kazlauskas}.
\item[(ii)]  The presented scheme guarantees an S-box for each image, which is not the case in ~\cite{signal}.
\item[(iii)] \textls[-15]{To get random numbers, the described scheme  generates triads for all images of the same size, while in~\cite{signal} the computation of an EC for each input image is necessary, which is time consuming.}
\item[(iv)] The scheme in~\cite{[ra23]} uses eight dynamic S-boxes for a plain image, while the current scheme uses only one dynamic S-box for each image to get the desired cryptographic security.
\end{itemize}
\section{Conclusions}\label{Conclusion}
 \textls[-15]{An image encryption scheme based on quasi-resonant triads and MECs was introduced. The~proposed technique constructs triads to generate pseudo-random numbers and computes an MEC to construct an S-box for each input image. The pseudo-random numbers and S-box are then used for altering and scrambling the pixels of the plain image, respectively.  As for the advantages of our proposed method, firstly triads are based on auxiliary parameters of elliptic surfaces, and thus pseudo-random numbers and S-boxes generated by our method are highly sensitive to the plain image, which prevents adversaries from initiating any successful attack. Secondly, generation of triads using auxiliary parameters of elliptic surfaces consumes less time than computing points on ECs {(we find a 4x speed increase for a range of image resolutions $m \in [128, 512]$)}, which makes the new encryption system relatively faster.  Thirdly, our algorithm generates the cipher images with an appropriate security level.}

In summary, all of the above analyses imply that the presented scheme is able to resist all attacks. It has high encryption efficiency and less time complexity than some of the existing techniques. In the future, the current scheme will be further optimized by means of new ideas to construct the S-boxes using the constructed triads, so that  we will not need to compute an MEC for each input image.
\vspace{6pt}

\authorcontributions{All authors contributed equally to this work.}

\funding{This research is funded through the HEC project NRPU-7433.}

\acknowledgments{We thank Gene Kopp for useful comments and suggestions.}

\conflictsofinterest{The authors declare no conflict of interest. The funding sponsors had no role in the design of the study; in the collection, analyses, or interpretation of data; in the writing of the manuscript, and in the decision to publish the results.}

\abbreviations{The following abbreviations are used in this manuscript:\\

\noindent
\begin{tabular}{@{}ll}
MEC & Mordell elliptic curve\\
S-box & Substitution box\\
EC & Elliptic curves\\
\end{tabular}}

\appendixtitles{no} 


\reftitle{References}

\end{document}